\documentclass{article}
\usepackage{array}
\usepackage{amssymb}
\usepackage{mathrsfs}
\usepackage{geometry}
\usepackage{graphicx}
\usepackage{subfigure}
\usepackage{abstract}
\usepackage{ntheorem}
\usepackage{amsmath}
\usepackage{textcomp}
\usepackage{multicol} 
\usepackage{titlesec}
\usepackage{wrapfig}
\usepackage{soul}
\usepackage{etoolbox}
\usepackage{algorithm}
\usepackage{algpseudocode}


\usepackage{tikz}

\usetikzlibrary{shapes.geometric, arrows}
\usetikzlibrary{arrows.meta, positioning, calc}
\tikzstyle{circleNode} = [circle, minimum size=0.8cm, text centered, draw=black]
\tikzstyle{arrow} = [thick,->,>=stealth]
\usepackage[hidelinks]{hyperref}
\renewcommand{\epsilon}{\varepsilon}

\makeatletter
\newcommand{\SPR}{\textit{SPR}}
\newcommand{\MSPR}{\textit{MSPR}}

\newcommand{\plat}[1]{\raisebox{0pt}[0pt][0pt]{#1}}

\newcommand{\Rmnum}[1]{\expandafter\@slowromancap\romannumeral #1@}
\makeatother
\newcommand{\powerset}{\raisebox{.15\baselineskip}{\Large\ensuremath{\wp}}}

\theorembodyfont{\upshape}
\newtheorem{theorem}{Theorem}
\newtheorem{lemma}{Lemma}

\newtheorem{observation}{Observation}
\newtheorem{example}{Example}

\newtheorem{proposition}{Proposition}
\newtheorem{corollary}{Corollary}

\newtheorem{myDef}{Definition}

\newenvironment{proof}{{\noindent\it Proof.}}{\hfill $\square$}

\newcommand{\doi}[1]{DOI: \href{https://doi.org/#1}{#1}}

\title{The Similarity Control Problem with Required Events}
\date{}

\usepackage{authblk} 

\titleformat{\section}{\normalfont\large\bfseries}{\thesection}{1em}{\large}
\author[a,d]{Yu Wang}
\author[a$^*$]{Zhaohui Zhu}
\author[d,b]{Rob van Glabbeek}
\author[c]{Jinjin Zhang}
\author[d]{Yixuan Li}

\affil[a]{College of Computer Science and Technology, Nanjing University of Aeronautics and Astronautics}
\affil[b]{School of Computer Science and Engineering, University of New South Wales}
\affil[c]{School of Computer Science, Nanjing Audit University}
\affil[d]{School of Informatics, University of Edinburgh}

\titleformat{\section}
{\normalfont\large\bfseries}{\thesection}{1em}{}

\titlespacing{\section}{0pt}{1.5ex plus 0.2ex minus 0.2ex}{1ex plus 0.2ex}

\BeforeBeginEnvironment{proof}{\vspace{-0.5\baselineskip}}
\AfterEndEnvironment{proof}{\vspace{-0.5\baselineskip}}

\begin{document}
	
	\twocolumn[	 
	\maketitle
	\vspace{-1cm}

	\begin{onecolabstract}
	In order to guarantee that a supervised system satisfies safety requirements of the specification, as well as requirements saying that in certain states certain events must be enabled, this paper introduces required events for discrete event systems and reconsiders the similarity control problem  while taking all requirements from the specification into account. The notion of a covariant-contravariant simulation,  which is  finer than the conventional notion of simulation, is adopted to act as the behavioral relation of supervisory control theory. A necessary and sufficient condition for the solvability of this problem is established and a method for synthesizing a maximally permissive supervisor  is provided.
		
		\noindent\textbf{\footnotesize Keywords:  Nondeterministic discrete event systems, covariant-contravariant simulation, supervisory control.}
		
	\end{onecolabstract}
	]

\section{Introduction}

Discrete event systems (DESs) are event-driven systems composed of discrete events which happen at distinct points in time. Supervisory control theory, originating from the work of   Ramadge and Wonham \cite{1987Supervisory}, provides a framework to explore the control of discrete event systems. In this theory, both systems---called \emph{plants}---and  specifications are  modeled as (non)deterministic automata \cite{Kimura2014Maximally,kushi2017synthesis,2019Maximally,2022Synthesis,2023MaximallyPermissive,1987Supervisory,SUN2014287,2020Synthesis,2018Nonblockingsimilarity,2007Control,2006Control}. A supervisor, acting as a controller, is also modeled as an automaton, which controls the system  to ensure that the supervised system could run as desired.

Language  equivalence  is the first criterion adopted to evaluate whether a supervised system runs as desired.
Although language equivalence is a suitable notion to capture 
behavioral equivalence for deterministic systems, it is often unsuitable in the nondeterministic case. Therefore, bisimilarity \cite{1980Robin}, which is the finest behavioral equivalence in common use \cite{GLABBEEK20013},  was introduced in supervisory control theory in \cite{2006Control}. 

In a nondeterministic setting, bisimilarity control problems are considered in \cite{takai2019bisimilarity,2006Control}, which aim to synthesize a supervisor so that the supervised system is bisimilar to  a given specification. In \cite{takai2019bisimilarity,2006Control} it is assumed that a supervisor can observe all event occurrences of the plant and the supervised system is modeled by the synchronous product of the plant and supervisor. Bisimilarity control problems are  also considered in another  framework in \cite{KatsuyukiKIMURA2014,LIU2011782}.
There the supervised plant is modeled with a relation-based control 
mechanism instead of synchronous products, and a necessary and sufficient condition for the existence of a bisimilarity enforcing supervisor is obtained in terms of simulation-based controllability \cite{KatsuyukiKIMURA2014}. Under the assumption that a supervisor can observe both event occurrences and the current state of the plant, the literature \cite{LIU2011782} also considers the bisimilarity control problem. Similar to \cite{KatsuyukiKIMURA2014}, the relation-based control mechanism is adopted to model the supervised plant, and a necessary and sufficient condition for the existence of a bisimilarity enforcing supervisor is given in terms of partial bisimilarity. The supervisor in \cite{LIU2011782} is required to meet more requirements than the one in \cite{KatsuyukiKIMURA2014}, which ensure that events enabled by a supervisor are never disabled in the supervised plant.

The ideal situation is that one can synthesize  a bisimilarity enforcing supervisor  such that the supervised system and the specification are bisimilar. However, this aim is not always achievable 	
due to the existence of nondeterministic transitions and  uncontrollable events. Hence, researchers relax the requirement of bisimilarity and consider the similarity control problem which refers to finding a supervisor  so that the supervised system is simulated by the specification \cite{2022Synthesis,2023MaximallyPermissive,2020Synthesis,2007Control}. In this situation, the behavior of the supervised system must be allowed by the specification, that is, it meets safety requirements.
In  \cite{2020Synthesis}, a necessary and sufficient condition for the solvability of the similarity control problem is obtained and a method for constructing	
maximally permissive supervisors is considered. Further, this work is extended to the nonblocking similarity control problem under partially event observations in  \cite{2023MaximallyPermissive}. 

However, similarity may be too coarse a behavioral relation, because some requirements from specifications may be neglected by the supervised systems, leading to unreasonable results in certain situations.
For example, consider an automatic check-out scanner in a shopping mall, an example taken from \cite{2006Control} with slight modifications. A plant $G$, modeled as a state machine, is represented in the left of Figure \ref{figure-motivating1}. 
\begin{figure}[hbt]
	\centering
	\begin{minipage}[t]{0.22\textwidth}
		\centering
		\begin{tikzpicture}[
			->, >=stealth,
			every node/.style={minimum size=0.5cm, align=center},
			xnode/.style={draw, circle},
			every path/.style={thick},
			font=\scriptsize
			]
			
			\node[xnode] (x0) at (0, 0)    {$x_0$};
			\node[xnode] (x1) at (0, -1.2)   {$x_1$};
			\node[xnode] (x2) at (-0.8, -2.4) {$x_2$};
			\node[xnode] (x3) at ( 0.8, -2.4) {$x_3$};
			\node[xnode] (x4) at (0, -3.6)    {$x_4$};
			
			\draw (x0) -- (x1) node[midway, right] {start};
			\draw (x1) -- (x2) node[midway, left] {scan};
			\draw (x1) -- (x3) node[midway, right] {scan};
			\draw (x2) to[bend left=15]  node[midway, right] {cancel} (x4);
			\draw (x2) to[bend right=15] node[midway, left]  {put}    (x4);
			\draw (x3) to[bend left=15]  node[midway, right] {put}    (x4);
			
			\coordinate (rightx4) at ($(x4) + (1.3, 0)$);
			\coordinate (rightx0) at ($(x0) + (1.3, 0)$);
			\draw (x4) -- (rightx4) |- node[near start, right] {pay} (x0.east);
			
			\coordinate (leftx4) at ($(x4) + (-1.3, 0)$);
			\coordinate (leftx1)  at ($(x1) + (-1.3, 0)$);
			\draw (x4) -- (leftx4) |- node[near start, left] {next} (x1.west);
		\end{tikzpicture}
	\end{minipage}
	\hfill
	\begin{minipage}[t]{0.2\textwidth}
		\centering
		\begin{tikzpicture}[
			->, >=stealth,
			every node/.style={minimum size=0.5cm, align=center},
			xnode/.style={draw, circle},
			every path/.style={thick},
			font=\scriptsize
			]

			\node[xnode] (z0) at (0, 0)    {$z_0$};
			\node[xnode] (z1) at (0, -1.2) {$z_1$};
			\node[xnode] (z2) at (0, -2.4)  {$z_2$};
			\node[xnode] (z4) at (0, -3.6)  {$z_4$};

			\draw (z0) -- (z1) node[midway, right] {start};
			\draw (z1) -- (z2) node[midway, left] {scan};
			\draw (z2) to[bend left=15]  node[midway, right] {cancel} (z4);
			\draw (z2) to[bend right=15] node[midway, left]  {put}    (z4);
	
			\coordinate (rightz4) at ($(z4) + (1.2, 0)$);
			\coordinate (rightz0) at ($(z0) + (1.2, 0)$);
			\draw (z4) -- (rightz4) |- node[near start, right] {pay} (z0.east);
			
			\coordinate (leftz4) at ($(z4) + (-1.0, 0)$);
			\coordinate (leftz1)  at ($(z1) + (-1.2, 0)$);
			\draw (z4) -- (leftz4) |- node[near start, left] {next} (z1.west);
		\end{tikzpicture}
	\end{minipage}
	\caption{The plant $G$ (left) and specification $R$ (right)}
	\label{figure-motivating1}
\end{figure}
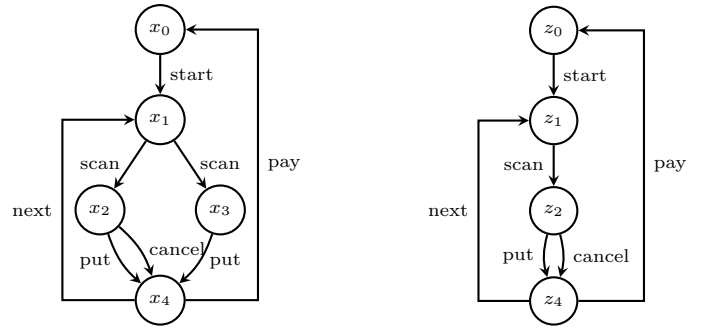

The plant $G$ starts when a customer presses the start button to initiate the check-out process. Then, an item is scanned by the scanner which nondeterministically transitions to one of two states $x_2$ and $x_3$. In the state $x_2$, the customer can choose to either put the item in a bag  or cancel the transaction, while in the state $x_3$, the only option available is to put the item in the bag. As auditors of the system we judge not providing the option to cancel in the  state $x_3$ to be unacceptable.	
Once the item is handled, the scanner waits for either a request for the next item or a request to pay whenever there are no more items. In the former case, the scanner goes back to the  state $x_1$ where the check-out process resumes, whereas it returns to the initial state $x_0$ in the latter case. Since the events $start$, $scan$, $put$, $cancel$ and $pay$ are performed by customers, they are classified as \emph{uncontrollable} for the plant $G$. This means that no supervisor can or may prevent these actions from occurring. The  event $next$ is controllable. This means that we allow a supervisor at some point not to enable the $next$-event, thereby leaving $pay$ as the only possibility to proceed.

In order to ensure that the scanner behaves in a desirable manner, a specification $R$ is displayed in the right of Figure \ref{figure-motivating1}. According to this specification, after the start and scan steps, only one state $z_2$ can  be reached. 
In this state, both the putting and cancellation options are available, which ensures that this  specification grants users the right to cancel their purchase. The transitions that can be performed in $z_4$ are similar to those in state $x_4$.

For the  similarity control problem  in the conventional sense \cite{2022Synthesis,2023MaximallyPermissive,2020Synthesis,2007Control}, there exists a $\Sigma_{uc}$-admissible supervisor $S$ shown in the left of Figure \ref{figure-motivatine example2} such that the supervised plant $S||G$ shown in the right of Figure \ref{figure-motivatine example2} is simulated by the specification $R$, that is, all the behaviors of $S||G$ are allowed by $R$.
\begin{figure}
	\centering
	\begin{minipage}[t]{0.2\textwidth}
		\centering
		\begin{tikzpicture}[
			->, >=stealth,
			every node/.style={minimum size=0.5cm, align=center},
			xnode/.style={draw, circle},
			every path/.style={thick},
			font=\scriptsize
			]

			\node[xnode] (y0) at (0, 0)    {$y_0$};
			\node[xnode] (y1) at (0, -1.2) {$y_1$};
			\node[xnode] (y2) at (0, -2.4)  {$y_2$};
			\node[xnode] (y4) at (0, -3.6)  {$y_4$};
			
			\draw (y0) -- (y1) node[midway, right] {start};
			\draw (y1) -- (y2) node[midway, left] {scan};
			\draw (y2) to[bend left=15]  node[midway, right] {cancel} (y4);
			\draw (y2) to[bend right=15] node[midway, left]  {put}    (y4);
			
			\coordinate (righty4) at ($(y4) + (1.1, 0)$);
			\draw (y4) -- (righty4) |- node[near start, right] {pay} (y0.east);
			
			\coordinate (lefty4) at ($(y4) + (-0.9, 0)$);
			\draw (y4) -- (lefty4) |- node[near start, left] {next} (y1.west);
		\end{tikzpicture}
	\end{minipage}
	\hfill
	\begin{minipage}[t]{0.25\textwidth}
		\centering
		\begin{tikzpicture}[
			->, >=stealth,
			every node/.style={align=center, inner sep=2pt},
			xnode/.style={draw, ellipse, minimum height=0.7cm, minimum width=1.4cm}, 
			every path/.style={thick},
			font=\scriptsize
			]

			\node[xnode] (y0x0) at (0, 0)    {$(y_0,x_0)$};
			\node[xnode] (y1x1) at (0, -1.2)   {$(y_1,x_1)$};
			\node[xnode] (y2x2) at (-0.8, -2.4) {$(y_2,x_2)$}; 
			\node[xnode] (y2x3) at ( 0.8, -2.4) {$(y_2,x_3)$};
			\node[xnode] (y4x4) at (0, -3.6)    {$(y_4,x_4)$};
			
			\draw (y0x0) -- (y1x1) node[midway, right] {start};
			\draw (y1x1) -- (y2x2) node[midway, left] {scan};
			\draw (y1x1) -- (y2x3) node[midway, right] {scan};
			\draw (y2x2) to[bend left=15]  node[midway, right] {cancel} (y4x4);
			\draw (y2x2) to[bend right=15] node[midway, left]  {put}    (y4x4);
			\draw (y2x3) to[bend left=15]  node[midway, right] {put}    (y4x4);

			\coordinate (righty4x4) at ($(y4x4) + (1.7, 0)$); 
			\draw (y4x4) -- (righty4x4) |- node[near start, right] {pay} (y0x0.east);
			
			\coordinate (lefty4x4) at ($(y4x4) + (-1.7, 0)$); 
			\draw (y4x4) -- (lefty4x4) |- node[near start, left] {next} (y1x1.west);
		\end{tikzpicture}
	\end{minipage}
	\caption{The supervisor $S$ (left) and supervised system $S||G$ (right)}
	\label{figure-motivatine example2}
\end{figure}
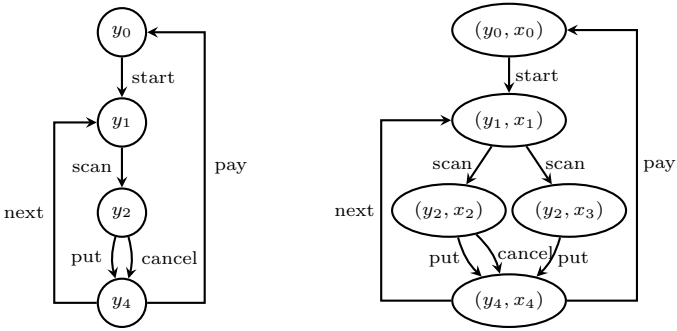
\noindent After the starting and scanning steps, the supervised system transitions nondeterministically into either the state $(y_2,x_2)$ or $(y_2,x_3)$. Since the event $cancel$ can not be initiated in the state $(y_2,x_3)$, the customer's right of cancellation transaction is not always respected by the supervised system $S||G$, and hence it doesn't support  that   customers can cancel transactions freely after scanning, which is one of the requirements from $R$.

The existing works on the similarity control problem focuses  on safety (that is, the behaviors of the supervised system must be allowed by the specification), which admits supervised systems that do not respect any requirements from the specification saying that certain transitions \emph{must} be possible. It is obvious that such a drawback disappears if, as done in \cite{SUN2014287,2006Control}, the supervised system is required to be bisimilar to the specification under consideration. However, since there exist nondeterministic transitions and uncontrollable events, a bisimilarity enforcing supervisor doesn't always exist, which inspired work on the similarity control problem \cite{2022Synthesis,2023MaximallyPermissive,2020Synthesis,2007Control}.  Nevertheless, the above example indicates that, for a given plant and specification, some solutions to the similarity control problem may be unreasonable in the sense  that the corresponding supervised systems do not realize the  requirements intended by the specification. This paper intends to filter out these unreasonable solutions by introducing \emph{required events} in similarity control problems. Roughly speaking, we will consider a supervisory control problem where the criteria for solutions fall between similarity and bisimilarity control problems.
	
For the similarity control problem with required events considered in this paper, the supervised system must not only be safe, that is, it ensures that all its behaviors are permitted by the specification, but also respect the occurrence of events that represent requirements prescribed by the specification (e.g.\ the event \textit{cancel} in the above scenario).

Similar ideas emerged in the range control problem \cite{2007Control}, where a pair of
specifications are introduced to specify the permitted and required transitions
respectively. Moreover, the standard notion of simulation is adopted to capture the relationship
between supervised systems and the pair of specifications (in detail, it should simulate one
specification and be simulated by the other). Unlike \cite{2007Control}, this paper characterizes permissions and requirements  by classifying events, and assigning different attributes to different types of events when considering the behavioral relationship between supervised systems and specifications, which is similar to the well-known modal transition systems (MTSs, for short) \cite{10.1007/3-540-52148-8_19,5119}. MTSs  are models of reactive computation based on states and transitions, where the transitions  are distinguished into two types: the $may$ transitions and the $must$ transitions.  The former describe the actions that implementations are allowed to perform, which constitute a safety property, while the latter specify actions that implementations must perform. For instance, all transitions in the specification $R$ in Figure \ref{figure-motivating1} are $may$ transitions, while those labeled by $cancel$ can additionally be designated as $must$ transitions if $R$ requires the system to respect the customer's right to cancel purchases.

To make up for the shortcomings in the research on simulation control problems mentioned above, the notion of a covariant-contravariant simulation \cite{aceto2012relating,aceto2013specification,FFP09,fabregas2010logics} will be adopted to act as our behavioral preorder. This notion divides events into three classes and proposes  different criteria  for comparing behaviors of specifications and their implementations. In such a framework, the supervised systems are required to comply with both safety requirements and required events specified by their specifications.

The main contributions  of this paper are listed as follows.

(1) This paper first formalizes the similarity control problem with required events. Then, a necessary and sufficient condition for the existence of solutions to this  problem is proposed. 

(2) Ideally, the synthesized supervisor should be as permissive as possible. Hence, we provide a method for synthesizing a maximally permissive supervisor for the similarity control problem with required events. 

(3) Since covariant-contravariant simulation is a generalization of both simulation and
  bisimulation, the results obtained in this paper are also applicable to (bi)simulation  control problems.

The rest of the paper is organized as follows. Some   preliminaries are presented in Section~\ref{sec:preliminaries}.  Section \ref{sec:controlproblem} formalizes the similarity control problem with required events and discusses necessary and sufficient  conditions for the existence of a solution to this  problem. In Section \ref{sec:synthesis}, a method for constructing a maximally permissive supervisor is proposed.
We discuss the relationship between  the similarity control problem with required events and other control problems for discrete event system in Section \ref{sec:relations}.  Section \ref{sec:conclusion} concludes this paper.
	A verification algorithm for the necessary and sufficient condition  and a synthesis algorithm for the maximally permissive supervisor are available in the appendix.
	
	\vspace{2ex}

\section{Preliminaries}\label{sec:preliminaries}

	An  automaton  is  a 4-tuple $G=(X,\Sigma, \longrightarrow_G, X_0)$, where $X$ is a set of \emph{states}, $\Sigma$ is a finite set of \emph{events}, ${\longrightarrow_G}\subseteq X\times \Sigma\times X$ and $X_0(\neq\emptyset)\subseteq X$ is the \emph{initial state set}. $G$	is said to be \emph{deterministic} if $|X_0|=1$ and for any $\sigma\in \Sigma$ and $x\in X$, there exists at most one state, say $x_1$, such that $(x,\sigma,x_1)\in {\longrightarrow_G}$.
		As usual, we write $x\stackrel{\sigma}{\longrightarrow}_Gx'$ whenever $(x,\sigma, x')\in {\longrightarrow_G}$;\linebreak[4] \plat{$x\stackrel{\sigma}{\,\not\!\!\longrightarrow}_G$ iff $(x,\sigma, x')\not\in {\longrightarrow_G}$ for any $x'\mathbin\in X$};	   moreover, $x\stackrel{\sigma}{\longrightarrow}_G$ iff $\exists x'\in X\ (x\stackrel{\sigma}{\longrightarrow}_Gx')$.

	In the standard way, ${\longrightarrow_G}\subseteq X\times \Sigma\times X$ may be generalized to ${\longrightarrow_G}\subseteq X\times \Sigma^*\times X$, where $\Sigma^*$  is the set of all finite sequences of events in $\Sigma$.

    \begin{myDef}
        Given an automaton $G=(X,\Sigma, \longrightarrow_G, X_0)$, a state $x$ is said to be  $s$-reachable $(s\in\Sigma^* )$  in an  automaton $G$ if  $x_0\stackrel{s}{\longrightarrow_G} x$  for some $x_0\in X_0$.   
	Additionally, a state $x$ is said to be  reachable in $G$ if $x$ is  $s$-reachable   for some $s\in\Sigma^*$.
    \end{myDef}
    
	 In the following,  the subscript of the transitions (e.g.\ $G$ in $\longrightarrow_G$) will be omitted if this simplification does not cause any confusion.

	\textbf{Convention:}	In the remainder of this paper, a plant, supervisor and specification are modeled as automata $G=(X,\Sigma, \longrightarrow, X_0)$, $S=(Y,\Sigma,\longrightarrow, Y_0)$ and $R=(Z,\Sigma, \longrightarrow, Z_0)$ respectively. It is assumed by default that $G$, $R$ and $S$ have the aforementioned forms when they appear.

	\begin{myDef}\label{def-composition} \cite{2006Control}
		A plant $G$ supervised by a supervisor $S$ is modeled by the synchronous composition of $S$ and $G$, defined as  $S||G=(Y\mathop\times X,\Sigma, \longrightarrow,Y_0\mathop\times X_0)$, where		
		$(y,x)\stackrel{\sigma}{\longrightarrow}(y_1,x_1) \text{ iff } x\stackrel{\sigma}{\longrightarrow}x_1 \text{ and }  y\stackrel{\sigma}{\longrightarrow}y_1$ for any 	$(y,x), (y_1,x_1)\in Y\times X \text{ and } \sigma\in \Sigma.$
		
	\end{myDef}

	In supervisory control theory \cite{kushi2017synthesis,2023MaximallyPermissive,1987Supervisory,SUN2014287,2020Synthesis,2007Control,2006Control},  events are divided into two classes: \emph{controllable}  and \emph{uncontrollable} events. We denote by $\Sigma_{uc}$ (or, $\Sigma_{c}$) the set of uncontrollable (resp.\ controllable) events, so $\Sigma=\Sigma_{uc}\cup\Sigma_{c}$ and $\Sigma_{uc}\cap\Sigma_{c}=\emptyset$.	
	In addition, since this paper aims to deal with similarity control problems involving required events, we denote the set of required events as $\Sigma_r$ $ (\subseteq\Sigma)$.
	
	In order to formalize the intuition that uncontrollable events may not be disabled by  supervisors, the notion of $\Sigma_{uc}$-admissibility is adopted in   \cite{Kimura2014Maximally,kushi2017synthesis,2022Synthesis,2023MaximallyPermissive}, as recalled below.     
	\begin{myDef}\label{def-admissible uc}
		Given a plant $G$  and a specification $R$, a supervisor $S$ is said to be  \emph{$\Sigma_{uc}$-admissible} with respect to $G$ if, for any reachable state $(y,x)$ of $S||G$  and $\sigma\in \Sigma_{uc}$, $$(y,x)\stackrel{\sigma}{\longrightarrow} \text{whenever } x\stackrel{\sigma}{\longrightarrow}.$$	
	\end{myDef}
	
	\begin{myDef}\label{def-simulation}
		Let $G$ and $R$  be two  automata and $\Phi\subseteq X\times Z$. 
		
		\noindent $(1)$ $\Phi$ is said to be a \emph{simulation relation} from $G$ to $R$ if it satisfies the following two conditions:
		
		 (initial state) for any $x_0\in X_0$, there exists $z_0\in Z_0$ such that $(x_0,z_0)\in \Phi$ and
		
		 (forward) for any $(x,z)\in \Phi$, $x'\in X$ and $\sigma\in \Sigma$, $x\stackrel{\sigma}{\longrightarrow}x'$ implies $z\stackrel{\sigma}{\longrightarrow}z'$ and $(x',z')\in \Phi$ for some $z'\in Z$. 
		
		\noindent $(2)$ $\Phi$  is said to be a \emph{covariant-contravariant simulation  relation} (\emph{cc-simulation} for short) if it is a simulation relation such that

		($\Sigma_r$-backward) for any $ (x,z)\in \Phi$, $z' \in Z'$ and $\sigma\in \Sigma_r$,
		$z\stackrel{\sigma}{\longrightarrow}z'$ implies 
		$x\stackrel{\sigma}{\longrightarrow}x'$ and $ (x',z')\in \Phi$ for some $x'\in X$.	
		
		\noindent $(3)$ $\Phi$ is said to be a \emph{bisimulation} if both $\Phi$ and its inverse relation\footnote{Given a binary relation $\Phi\subseteq X\times Z$, its inverse relation $\Phi^{-1}\subseteq Z\times X$ is defined as $\Phi^{-1}=\{(z,x):(x,z)\in \Phi\}$.}$\Phi^{-1}$ are simulations.

		\noindent $(4)$  $\Phi$ is said to  be a \emph{$\Sigma_{ucr}$-simulation relation} from $G$ to $R$ if  it satisfies the conditions  (initial state),  ($\Sigma_r$-backward) and
		
		($\Sigma_{uc}$-forward) for any $ (x,z)\in \Phi$, $x'\in X$ and $\sigma\in \Sigma_{uc}$, $x\stackrel{\sigma}{\longrightarrow}x'$ implies $z\stackrel{\sigma}{\longrightarrow}z'$ and $ (x',z')\in \Phi$ for some $z'\in Z$.

		\noindent $(5)$ $\Phi$ is said to be a \emph{$\Sigma_{uc}$-simulation relation}  if it satisfies the conditions  (initial state) and  ($\Sigma_{uc}$-forward). 	
	\end{myDef}
	
	If there exists a  simulation   (or, cc-simulation, bisimulation, $\Sigma_{ucr}$-simulation, $\Sigma_{uc}$-simulation) relation $\Phi$ from $G$ to $R$, then $G$ is said to be simulated  (cc-simulated, bisimulated, $\Sigma_{ucr}$-simulated, $\Sigma_{uc}$-simulated) by $R$, in symbols, $G\sqsubseteq   R$  ($G\sqsubseteq_{cc}   R$, $G\simeq   R$, $G\sqsubseteq_{ucr}   R$, $G\sqsubseteq_{uc}   R$ resp.). It is easy to see that  $\sqsubseteq $, $\sqsubseteq_{cc}  $,  $\sqsubseteq_{ucr}  $ and $\sqsubseteq_{uc}  $ are preorders, that is, they are transitive and reflexive. Moreover, $\simeq$ is an equivalence relation  (reflexive, symmetric, and transitive).
	We also adopt the notation $\Phi: G\sqsubseteq   R$ to indicate that $\Phi$ is a  simulation   relation from $G$ to $R$. Similar notations are used for $\sqsubseteq_{cc}$, $\simeq$, $\sqsubseteq_{ucr}$ and $\sqsubseteq_{uc}$.
	
	The notion of cc-simulation in Definition \ref{def-simulation} is a special case of the classical  covariant-contravariant simulation  relation defined in \cite{aceto2012relating, aceto2013specification, FFP09, fabregas2010logics}, namely with the bivariant action set $\Sigma_r$,  covariant action set $\Sigma-\Sigma_r$ and  contravariant action set $\emptyset$.
	Note that taking $\Sigma_r=\emptyset$ makes cc-simulation equal to simulation, whereas taking $\Sigma_r=\Sigma$ makes cc-simulation almost equal to bisimulation (equal if $|Z_0|=1$).	
		Normally, simulation (or, bisimulation) is a finer behavioral notion than language inclusion (trace equivalence, resp.), that is, the former implies the latter. It is almost common knowledge that, for any two automata $R_1$ and $R_2$, $$R_1\simeq R_2 \Longrightarrow  R_1\sqsubseteq_{cc}R_2 \Longrightarrow R_1\sqsubseteq R_2 \Longrightarrow L(R_1)\subseteq L(R_2)$$ $$\text{ and }R_1\simeq R_2\Longrightarrow L(R_1)=L(R_2),$$ where $L(R_i)=\{s\mathbin\in\Sigma^*\!\!: z_0\stackrel{s}{\longrightarrow}\!\! \text{ for some initial state $z_0$ in }R_i\}$ with $i\in\{1,2\}$. Here the first two implications immediately follow from Definition 3, and the others can be found in \cite{GLABBEEK20013}.\vspace{2ex}

	\section{The Similarity Control Problem with Required Events}\label{sec:controlproblem}
	
    In this section, a new control problem for DESs  is first formulated. Subsequently, a necessary and sufficient condition for the existence of a solution to this problem is established.

    As  mentioned in Section 1, the current formalization of the similarity control problem for discrete event systems may be inadequate in certain scenarios as it fails to deal with the demand mandated by the specification. To compensate for this deficiency, the notion of  cc-simulation relation will be adopted in this paper to formalize  the control problem. In detail, the similarity control problem with required events is described formally as follows. 

	\textbf{Problem Formulation}	Given a plant $G$ and specification $R$, the $(G,R)$-similarity control problem with required events  is to find a $\Sigma_{uc}$-admis\-sible  (w.r.t.\ $G$)   supervisor $S$ such that $S||G\sqsubseteq_{cc} R$.

    The set of all such supervisors is denoted by $\SPR (G,R)$.
	This problem is said to be solvable if such  supervisors exist. Compared with  usual similarity control problems, the supervisor $S$ above is expected to meet more requirements, so that the supervised system $S||G$ respects the demands of the specification $R$ appropriately, which is captured formally by the condition  ($\Sigma_r$-backward) involved in the notion of cc-simulation.

	\begin{example}
		Consider the motivating example given in Section 1 again. In order to  respect the customers' right to cancel the purchase, we can set  $cancel$ as a required event  (i.e.\ $\Sigma_r=\{\text{cancel}\}$). 	In this scenario, since the state $(y_2,x_3)$ is reachable in $S||G$, each (cc-)simulation from $S||G$ to $R$ must involve its (cc-)simulation counterpart. Moreover, it's not difficult to see that $z_2$ is the unique state that can act as the simulation counterpart of the state $(y_2,x_3)$. Further, since each cc-simulation  relation must satisfy ($\Sigma_r$-backward), it follows from $\Sigma_r=\{cancel\}$, $z_2\stackrel{cancel}{\longrightarrow}$ and $(y_2,x_3)\,\,\,\,\not\!\!\!\!\stackrel{cancel}{\longrightarrow}$ that $S||G\not\sqsubseteq_{cc}R$, and hence
		the supervisor $S$ presented in Figure \ref{figure-motivatine example2} is not a solution for the $ (G,R)$-similarity control problem with required events.
	\end{example}

	For simplicity, we introduce the following predicate. Its schematic   is presented in Figure \ref{figure-introduction of match}, where $(x,z)\in W$ and $(x_1,z_1),(x_2,z_2)\in W'$.
	
	\begin{myDef}\label{def-the predication}
		Given two automata $G$ and $R$, a ternary relation $match_{G,R}$ over $\powerset (X\times Z) \times \Sigma \times \powerset (X\times Z)$\footnote{The powerset \powerset(X) of a set $X$ is defined as $\powerset(X)=\{A:A\subseteq X\}$.} is defined by, for any $W,W'\in \powerset (X\times Z)$ and $\sigma\in \Sigma$, $match_{G,R} (W,\sigma,W')$ iff $W$ satisfies the condition ($\sigma$-forward) up to $W'$, in detail,		
		$$match_{G,R} (W,\sigma,W') \Leftrightarrow \forall (x,z)\in W \, \forall x'\in X $$ $$(x\stackrel{\sigma}{\longrightarrow}x'\Longrightarrow \exists z'\in Z (z\stackrel{\sigma}{\longrightarrow}z' \text{ and }  (x',z')\in W')).$$ 
	\end{myDef}
	
	 \begin{figure}
\centering
\begin{minipage}[t]{0.48\linewidth}
    \centering
    \begin{tikzpicture}[
        ->, >=stealth,
        every node/.style={minimum size=0.5cm, align=center},
        xnode/.style={draw, circle},
        every path/.style={thick},
        font=\small,
        baseline=(x),    
        remember picture
        ]
        
        \node[xnode] (x) at (0, 0)    {$x$};
        \node[xnode] (x1) at (-1, -1) {$x_1$};
        \node[xnode] (x2) at (1, -1)  {$x_2$};       
 \node (xdots) at (1.3, -0.5) {$\dots$};

        \draw (x) -- (x1) node[midway, left] (t_label) {$\sigma$};
        \draw (x) -- (x2) node[midway, right] (v_label) {$\sigma$}; 
    \end{tikzpicture}
\end{minipage}
\hfill
\begin{minipage}[t]{0.48\linewidth}
    \centering
    \begin{tikzpicture}[
        ->, >=stealth,
        every node/.style={minimum size=0.5cm, align=center},
        xnode/.style={draw, circle},
        every path/.style={thick},
        font=\small,
        baseline=(z),    
        remember picture
        ]
        
        \node[xnode] (z) at (0, 0)    {$z$};
        \node[xnode] (z1) at (-1, -1)   {$z_1$};
        \node[xnode] (z2) at (1, -1) {$z_2$};
  \node (xdots) at (1.3, -0.5) {$\dots$};

        \draw[dashed] (z) -- (z1) node[midway, left] (u_label) {$\sigma$};
        \draw[dashed] (z) -- (z2) node[midway, right] (w_label) {$\sigma$};   
      
    \end{tikzpicture}
\end{minipage}

\begin{tikzpicture}[remember picture, overlay]
    
    \draw[dashed, thick] (x) to[bend left=10] node[above] {$W$} (z);

    \draw[dashed, thick] (x1) to[bend right=15] node[below] {$W'$} (z1);
       \draw[] (x2) to[bend right=15] node[below] {$W'$} (z2);

\end{tikzpicture}
\vspace{2ex}

\caption{The figure of $match_{G,R}(W,\sigma,W')$}
\label{figure-introduction of match}
\end{figure}

		In order to provide a necessary and sufficient condition for the existence of a solution to the similarity control problem with required events, a notion called $\Sigma_{ucr}$-controllability set is defined as follows.

	\begin{myDef}\label{def-controlability}
		Let $G$ and $R$  be two  automata. A  set  $E\subseteq \powerset (X\times Z)$ is said to be a \emph{$\Sigma_{ucr}$-controllability set} from $G$ to $R$  if it satisfies the following conditions:
		
		\noindent	 (istate)	there exists $W_0\in E$ such that $\forall x_0\in X_0 \, \exists z_0\in Z_0  ( (x_0,z_0)\in W_0)$;

		\noindent$ (\ref{def-controlability}-a) $ for any $W\in E$ and $\sigma\in \Sigma_{uc}$, there exists $W'\in E$ with  $match_{G,R} (W,\sigma,W')$;

		\noindent	$ (\ref{def-controlability}-b)$  for any $W\in E$, $ (x,z)\in W$, $\sigma\in \Sigma_r$ and $z'\in Z$ with $z\stackrel{\sigma}{\longrightarrow}z'$, there exists $x'\in X$ and $W'\in E$ such that $x\stackrel{\sigma}{\longrightarrow}x',   (x',z')\in W' $ and $match_{G,R} (W,\sigma,W')$.

	\end{myDef}
	
	Given automata $G$ and $R$, the next lemma states that the class of all $\Sigma_{ucr}$-controllability sets from $G$ to $R$ is closed under the operator $P:\powerset(\powerset(X\times Z))\longrightarrow \powerset(\powerset(X\times Z))$ defined by $P(E)=\bigcup_{W\in E}\powerset(W)$ for any $E\in \powerset(\powerset(X\times Z))$, that is, for any $W'\subseteq X\times Z$, $W'\in P(E)$ iff $W'\subseteq W$ for some $W\in E$.

	\begin{lemma}\label{lemma-E^* is successor}
			Let $G$ and $R$  be two  automata and  $E$  a $\Sigma_{ucr}$-controllability set from $G$ to $R$. Then $P(E)=\bigcup_{\widetilde{W}\in E}\powerset (\widetilde{W})$ satisfies the clause (istate) in Definition \ref{def-controlability} and the following two conditions. Hence, $P(E)$  is a $\Sigma_{ucr}$-controllability set from $G$ to $R$.

        	\noindent$ (\ref{def-controlability}-a') $ for any $W\in P(E)$ and $\sigma\in \Sigma_{uc}$, there exists $W'\in P(E)$ such that  $match_{G,R} (W,\sigma,W')$ and $$W'\subseteq \bigcup_{ (x,z)\in W} \{x': x \stackrel{\sigma}{\longrightarrow}x'\}\times \{z': z \stackrel{\sigma}{\longrightarrow}z'\};$$
		
		\noindent	$ (\ref{def-controlability}-b')$  for any $W\in P(E)$, $ (x,z)\in W$, $\sigma\in \Sigma_r$ and $z'\in Z$ with $z\stackrel{\sigma}{\longrightarrow}z'$, there exists $x'\in X$ and $W'\in P(E)$ such that $x\stackrel{\sigma}{\longrightarrow}x',   (x',z')\in W' $,  $match_{G,R} (W,\sigma,W')$ and $$W'\subseteq \bigcup_{ (x,z)\in W} \{x': x \stackrel{\sigma}{\longrightarrow}x'\}\times \{z': z \stackrel{\sigma}{\longrightarrow}z'\}.$$
			
		\begin{proof}
			Since $E$ is a 	$\Sigma_{ucr}$-controllability set from $G$ to $R$ and $E\subseteq P(E)$, the condition  (istate) holds trivially for $P(E)$. To complete the proof, it
			suffices to verify that $P(E)$ satisfies the conditions 	$ (\ref{def-controlability}-a') $  and 	$ (\ref{def-controlability}-b') $. In the following, we deal with the former, while the latter can be verified similarly---this is omitted here.
			Assume that $W^*\in P(E)$ and $\sigma\in \Sigma_{uc}$. Then there exists $W_1\in E$ such that $W^*\subseteq W_1$.
			Since $E$ is a $\Sigma_{ucr}$-controllability set from $G$ to $R$ and $W_1\in E$, there exists $W''\in E$ such that 
			$$\forall (x,z)\in W_1 \, \forall x'\in X  (x\stackrel{\sigma}{\longrightarrow}x'\Longrightarrow \exists z'\in Z $$ $$(z\stackrel{\sigma}{\longrightarrow}z' \text{ and }  (x',z')\in W'')).$$
			Set
			$$W'= W''\cap\bigcup_{ (x,z)\in W^*}\{x':x\stackrel{\sigma}{\longrightarrow} x'\}\times\{z':z\stackrel{\sigma}{\longrightarrow} z'\}.$$ 
			Clearly, it also holds that
			$$\forall (x,z)\in W^* \, \forall x'\in X  (x\stackrel{\sigma}{\longrightarrow}x'\Longrightarrow \exists z'\in Z $$
			$$(z\stackrel{\sigma}{\longrightarrow}z' \text{ and }  (x',z')\in W')),$$
			moreover, $W'\in P(E)$ due to $W'\subseteq W''\in E$, as desired.	
		\end{proof}				
	\end{lemma}

	Before giving a sufficient and necessary condition for the existence of a solution to the similarity control problem with required events,  a method for constructing automata from $\Sigma_{ucr}$-controllability sets is provided.
	\begin{myDef}\label{def-supervisor SE}
		Let $G$, $R$  be two  automata and $E$ a $\Sigma_{ucr}$-controllability set from $G$ to $R$. The automaton  $\mathcal{A} (E)\mathbin= ( \bigcup_{\widetilde{W}\in E}\powerset (\widetilde{W}),\Sigma, \longrightarrow, I_E)$ is defined by
		$$ I_E=\{W_0\in \bigcup_{\widetilde{W}\in E}\powerset (\widetilde{W}): \forall x_0\in X_0 \, \exists z_0\in Z_0 \  ( (x_0,z_0)\in W_0) $$ $$\text{ and }W_0\subseteq X_0\times Z_0\} $$		
		and, for any $W,W'\in \bigcup_{\widetilde{W}\in E}\powerset (\widetilde{W})$ and $\sigma\in \Sigma$, $W\stackrel{\sigma}{\longrightarrow}W'$ iff the following holds:

		\noindent$ (\ref{def-supervisor SE}-a)$ $\exists (x,z)\in W$ $\exists (x',z')\in W'$ $ (x\stackrel{\sigma}{\longrightarrow}x' \text{ and } z\stackrel{\sigma}{\longrightarrow}z')$;

		\noindent$ (\ref{def-supervisor SE}-b)$ $match_{G,R} (W,\sigma,W')$;
		
		\noindent$ (\ref{def-supervisor SE}-c)$ $W'\in\powerset (\bigcup_{ (x,z)\in W}\{x':x\stackrel{\sigma}{\longrightarrow} x'\}\times\{z':z\stackrel{\sigma}{\longrightarrow} z'\})$.

	\end{myDef}
	
	Since $\bigcup_{W\in E}\powerset (W)$ is a $\Sigma_{ucr}$-controllability set from $G$ to $R$ due to Lemma \ref{lemma-E^* is successor}, by Definition \ref{def-controlability}, there exists a set $W_0\in \bigcup_{W\in E}\powerset (W)$  realizing the clause  (istate) in Definition \ref{def-controlability}, which ensures that 
	the initial state set $I_E$ of $\mathcal{A} (E)$  is nonempty.

	\begin{example}\label{example-construction of SE}
		Consider the plant $G=(X,\Sigma,\longrightarrow,\{x_0\})$ and specification $R=(Z,\Sigma,\longrightarrow,\{z_0\})$ in Figure \ref{exampleoftheorem1}, where  $\Sigma_{uc}=\{uc_1,uc_2\}$, $\Sigma_{c}=\{c\}$ and $\Sigma_r=\{l\}$.

\begin{figure}
	\centering
  \begin{minipage}{0.32\columnwidth} \centering 

		\centering
		\begin{tikzpicture}[
			->, >=stealth,
			every node/.style={minimum size=0.4cm, align=center},
			xnode/.style={draw, circle},
			every path/.style={thick},
			font=\scriptsize
			]
			
			\node[xnode] (x0) at (0, 0)    {$x_0$};
			\node[xnode] (x1) at (0, -1.2)   {$x_1$};
			\node[xnode] (x2) at (-0.6, -2.4) {$x_2$};
			\node[xnode] (x3) at (0.6, -2.4) {$x_3$};
			\node[xnode] (x4) at (0, -3.6)    {$x_4$};

			\draw (x0) -- (x1) node[midway, right] {c};
			\draw (x1) -- (x2) node[midway, left] {uc$_1$};
			\draw (x1) -- (x3) node[midway, right] {uc$_1$};
			\draw (x2) to[bend left=15]  node[midway, right] {$l$} (x4);
			\draw (x2) to[bend right=15] node[midway, left]  {uc$_1$} (x4);
			\draw (x3) to[bend left=15]  node[midway, right] {$l$} (x4);

			\coordinate (rightx4) at ($(x4) + (1.0, 0)$);
			\draw (x4) -- (rightx4) |- node[near start, right] {uc$_2$} (x0.east);
		\end{tikzpicture}
	\end{minipage}
	\hfill
  \begin{minipage}{0.35\columnwidth} \centering 

		\centering
		\begin{tikzpicture}[
			->, >=stealth,
			every node/.style={minimum size=0.4cm, align=center},
			xnode/.style={draw, circle},
			every path/.style={thick},
			font=\scriptsize
			]

			\node[xnode] (z0) at (0, 0)    {$z_0$};
			\node[xnode] (z1) at (0, -1.2)   {$z_1$};
			\node[xnode] (z2) at (-0.9, -2.4) {$z_2$};
			\node[xnode] (z3) at (0.9, -2.4) {$z_3$};
			\node[xnode] (z4) at (0, -3.6)    {$z_4$};
			
			\draw (z0) -- (z1) node[midway, right] {c};
			\draw (z1) -- (z2) node[midway, left] {uc$_1$};
			\draw (z1) -- (z3) node[midway, right] {uc$_1$};
			\draw (z2) to[bend left=15]  node[midway, right] {$l$} (z4);
			\draw (z2) to[bend right=15] node[midway, left]  {uc$_1$} (z4);
			\draw (z3) to[bend right=15]  node[midway, left] {$l$} (z4);
			\draw (z3) to[bend left=15]  node[midway, right] {uc$_1$} (z4);
			
			\coordinate (rightz4) at ($(z4) + (1.3, 0)$);
			\draw (z4) -- (rightz4) |- node[near start, right] {uc$_2$} (z0.east);
		\end{tikzpicture}
	\end{minipage}
	\hfill
  \begin{minipage}{0.28\columnwidth} \centering 

		\centering
		\begin{tikzpicture}[
			->, >=stealth,
			every node/.style={minimum size=0.4cm, align=center},
			xnode/.style={draw, circle},
			every path/.style={thick},
			font=\scriptsize
			]

			\node[xnode] (w0) at (0, 0)    {$W_0$};
			\node[xnode] (w1) at (0, -1.05) {$W_1$};
			\node[xnode] (w2) at (0, -2.1)  {$W_2$};
			\node[xnode] (w4) at (0, -3.3)  {$W_3$};
			
			\draw (w0) -- (w1) node[midway, right] {c};
			\draw (w1) -- (w2) node[midway, right] {uc$_1$};
			\draw (w2) to[bend left=15]  node[midway, right] {$l$} (w4);
			\draw (w2) to[bend right=15] node[midway, left]  {uc$_1$} (w4);

			\coordinate (rightw4) at ($(w4) + (0.8, 0)$);
			\draw (w4) -- (rightw4) |- node[near start, right] {uc$_2$} (w0.east);
		\end{tikzpicture}
	\end{minipage}
	\caption{The plant $G$ (left), specification $R$ (middle) and reachable part of supervisor $\mathcal{A}(E)$ (right)}
			\label{exampleoftheorem1}
\end{figure}
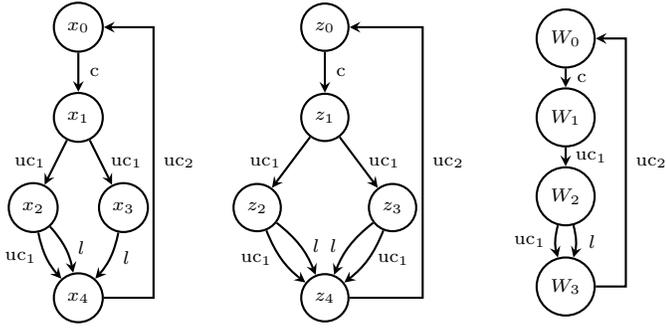
		
		It could be checked straightforwardly that the set 
		$E=\{W_i: 0\leqslant i \leqslant 3 \}$ is a $\Sigma_{ucr}$-controllability set $E$ from $G$ to $R$ and $\bigcup_{W\in E}\powerset (W)$ $=E\cup\{W_4,W_5,\emptyset\}$,
		where $W_0=\{ (x_0,z_0)\}$, $W_1=\{ (x_1,z_1)\}$, $W_2=\{ (x_2,z_2), (x_3,z_3)\}$, 
		$W_3=\{ (x_4,z_4)\}$, $W_4=\{ (x_2,z_2)\}$	and $W_5=\{ (x_3,z_3
		)\}$. By Definition \ref{def-supervisor SE}, the reachable part of the supervisor $\mathcal{A} (E)$ is   presented graphically in Figure \ref{exampleoftheorem1}.

	\end{example}
	
	The following lemma reveals that  $\mathcal{A} (E)$ contains sufficient information about the transitions of the automaton $G$. Specifically, the $G$-projection of any $s$-reachable state in the former consists exactly of all $s$-reachable states in the latter.

	\begin{lemma}\label{lemma-about xz in W}
		Given two  automata  $G$ and $R$, if $E$ is a $\Sigma_{ucr}$-controllability set from $G$ to $R$ and $\mathcal{A} (E)= (\bigcup_{\widetilde{W}\in E}\powerset (\widetilde{W}),\Sigma,\linebreak \longrightarrow, I_E)$, then, for any $s\mathbin\in \Sigma^*$, $W_0\mathbin\in I_E$ and $W\mathbin\in\bigcup_{\widetilde{W}\in E}\powerset (\widetilde{W})$, $W_0\stackrel{s}{\longrightarrow}W$ implies $\pi_G (W)=Reach (s,X_0)$, where 
		\begin{align}
			&\pi_G (W)= \{x\in X:  (x,z)\in W \text{ for some }z\in Z\} \text{ and }\notag\\ &Reach (s,X_0)=\{x\in X: x_0\stackrel{s}{\longrightarrow}x \text{ for some }x_0\in X_0\}\notag.
		\end{align}

		\begin{proof}
			It proceeds by induction on $|s|$, i.e., the length of $s$. If $|s|=0$ then $s=\varepsilon$ and $W_0=W$, and hence $\pi_G (W)=X_0=Reach (s,X_0)$ by  Definition \ref{def-supervisor SE}. In the following, we deal with the inductive step and let $|s|=k+1$ with $k\geqslant 0$.   		 		
			Assume that $s=s_1\sigma$ and $W_0\stackrel{s_1}{\longrightarrow}W_1\stackrel{\sigma}{\longrightarrow}W$ with  $W_1\in \bigcup_{\widetilde{W}\in E}\powerset (\widetilde{W})$. By the induction hypothesis, we have $\pi_G (W_1)=Reach (s_1,X_0)$. Next we intend to show $\pi_G (W)=Reach (s,X_0)$. 
			
			($\subseteq$) Let $x\in \it left.$ Then there exists $z\in Z$ such that $ (x,z)\in W$. Due to $W_1\stackrel{\sigma}{\longrightarrow}W$, by the clause $ (\ref{def-supervisor SE}-c)$ in Definition \ref{def-supervisor SE}, we get\vspace{-5pt}
   $$W\in\powerset (\bigcup_{ (x_1,z_1)\in W_1}\{x':x_1\stackrel{\sigma}{\longrightarrow} x'\}\times\{z':z_1\stackrel{\sigma}{\longrightarrow} z'\}).$$  	
			Thus, there exists $ (x_1,z_1)\in W_1$ such that $x_1\stackrel{\sigma}{\longrightarrow}x$ and $z_1\stackrel{\sigma}{\longrightarrow}z$. Due to $ (x_1,z_1)\mathbin\in W_1$, $x_1\mathbin\in \pi_G (W_1)\mathbin=Reach (s_1,X_0)$. Then it follows that $x_0\stackrel{s_1}{\longrightarrow}x_1\stackrel{\sigma}{\longrightarrow}x$ for some $x_0\in X_0$, which implies $x\in Reach (s,X_0)$.
			
			$ (\supseteq)$ Let $x\in right.$ Thus, $x_0\stackrel{s_1}{\longrightarrow}x_1\stackrel{\sigma}{\longrightarrow}x$ for some $x_1\in X$ and $x_0\in X_0$. Then we have $x_1\in Reach (s_1,X_0)=\pi_G (W_1)$, and thus $ (x_1,z_1)\in W_1$ for some $z_1\in Z$. Due to $W_1\stackrel{\sigma}{\longrightarrow}W$, by the clause $ (\ref{def-supervisor SE}-b)$ in Definition \ref{def-supervisor SE}, it follows from 
			$x_1\stackrel{\sigma}{\longrightarrow}x$ and $ (x_1,z_1)\in W_1$ that
			$z_1\stackrel{\sigma}{\longrightarrow}z$ and $ (x,z)\in W$ for some $z\in Z$, and hence $x\in \pi_G (W)$.	  
		\end{proof}
	\end{lemma}
	
	The next lemma reveals that
	the  supervisor $\mathcal{A} (E)$ is a solution for the $ (G,R)$-similarity control problem with required events.

	\begin{lemma}\label{lemma- E exists implies AE is a solution}
		Let $G$ and $R$ be two automata. For any $\Sigma_{ucr}$-controllability set $E$ from $G$ to $R$, $\mathcal{A} (E)\in \SPR (G,R)$.\\
		
		\begin{proof}
			Let  $\mathcal{A} (E)= (\bigcup_{\widetilde{W}\in E}\powerset (\widetilde{W}),\Sigma,\longrightarrow,I_E)$.
			First, we prove that $\mathcal{A} (E)$ is $\Sigma_{uc}$-admissible. Let $ (W,x)$ be  reachable in $\mathcal{A} (E)||G$, $\sigma\in \Sigma_{uc}$ and $x\stackrel{\sigma}{\longrightarrow}$. Then $ (W_0,x_0)\stackrel{s}{\longrightarrow}  (W,x)$ for some $s\in \Sigma^*$, $W_0\in I_E$  and $x_0\in X_0$, and hence $W_0\stackrel{s}{\longrightarrow}W$ and $x_0\stackrel{s}{\longrightarrow}x$ by Definition \ref{def-composition}. Further, by Lemma \ref{lemma-about xz in W}, we get $x\in Reach (s,X_0)= \pi_G (W)$, and thus
			$ (x,z)\in W$ for some $z\in Z$. Since $\bigcup_{\widetilde{W}\in E}\powerset (\widetilde{W})$ is a $\Sigma_{ucr}$-controllability set  from $G$ to $R$ due to Lemma \ref{lemma-E^* is successor}, by the clause $ (\ref{def-controlability}-a')$ in  Lemma \ref{lemma-E^* is successor}, there exists $W'\in \bigcup_{\widetilde{W}\in E}\powerset (\widetilde{W})$ such that the clauses $ (\ref{def-supervisor SE}-b)$ and $ (\ref{def-supervisor SE}-c)$ in Definition \ref{def-supervisor SE} hold for $\sigma$, $W$ and $W'$, which, together with facts that $ (x,z)\in W$, $x\stackrel{\sigma}{\longrightarrow}$ and $\sigma\in \Sigma$, implies that $ (\ref{def-supervisor SE}-a)$ also holds for $\sigma$, $W$  and $W'$. Therefore, $W\stackrel{\sigma}{\longrightarrow}W'$ due to Definition \ref{def-supervisor SE}. Thus, by Definition \ref{def-admissible uc}, $\mathcal{A} (E)$ is $\Sigma_{uc}$-admissible.

			Next we show that $\mathcal{A} (E)||G\sqsubseteq_{cc}R$. To this end, we prove that
			the relation $\Phi_{E}\subseteq  (\bigcup_{\widetilde{W}\in E}\powerset (\widetilde{W})\linebreak[1]\times X)\times Z$ defined as\vspace{-3pt}
			$$ \Phi_{E}=\{ ( (W,x),z):  (x,z)\in W \text{ and } W\in \bigcup_{\widetilde{W}\in E}\powerset (\widetilde{W})\}\vspace{-3pt}$$ 
			is a cc-simulation relation from $\mathcal{A} (E)||G$ to $R$. Let $ (W_0,x_0)$ be any initial state in $\mathcal{A} (E)||G$. Then $ (W_0,x_0)\in I_E\times X_0$. By Definition \ref{def-supervisor SE}, there exists $z_0\in Z_0$ such that $ (x_0,z_0)\in W_0\in \bigcup_{\widetilde{W}\in E}\powerset (\widetilde{W})$. Hence, $ ( (W_0,x_0),z_0)\in \Phi_E$. Thus, the  condition  (initial state) in Definition \ref{def-simulation} holds.  Below, we verify the conditions  ($\Sigma$-forward) and  ($\Sigma_r$-backward) in Definition \ref{def-simulation} in turn.	Let $ ( (W,x),z)\in \Phi_E$. Then $ (x,z)\in W\in\bigcup_{\widetilde{W}\in E}\powerset (\widetilde{W})$.
			
			($\Sigma$-forward) Assume that $ (W,x)\stackrel{\sigma}{\longrightarrow} (W',x')$ and   $\sigma\in \Sigma$. By the clause $ (\ref{def-supervisor SE}-b)$ in Definition \ref{def-supervisor SE}, it follows from $ (x,z)\in W$ and $x\stackrel{\sigma}{\longrightarrow}x'$  that $z\stackrel{\sigma}{\longrightarrow}z'$ and $ (x',z')\in W'$ for some $z'\in Z$. Clearly, $ ( (W',x'),z')\in \Phi_E$.

			($\Sigma_r$-backward) Assume $z\stackrel{\sigma}{\longrightarrow}z'$  and $\sigma\in \Sigma_r$.
			Then, by  the clause $ (\ref{def-controlability}-b')$ in  Lemma \ref{lemma-E^* is successor}, it follows from $ (x,z)\in W\in \bigcup_{\widetilde{W}\in E}\powerset (\widetilde{W})$, $\sigma\in \Sigma_r$ and $z\stackrel{\sigma}{\longrightarrow}z'$ that there exists $x'\in X$ and $W'\in E$ such that $x\stackrel{\sigma}{\longrightarrow}x'$, $ (x',z')\in W'\subseteq \bigcup_{ (x,z)\in W}\{x':x\stackrel{\sigma}{\longrightarrow} x'\}\times\{z':z\stackrel{\sigma}{\longrightarrow} z'\}$ and  	$match_{G,R} (W,\sigma,W')$. 
			Thus, the clauses $ (\ref{def-supervisor SE}-a)$, $ (\ref{def-supervisor SE}-b)$ and  $ (\ref{def-supervisor SE}-c)$ in Definition \ref{def-supervisor SE} hold for $\sigma$, $W$ and $W'$, and hence $W\stackrel{\sigma}{\longrightarrow}W'$, which implies $ (W,x)\stackrel{\sigma}{\longrightarrow} (W',x')$ due to $x\stackrel{\sigma}{\longrightarrow}x'$. Moreover, $ ( (W',x'),z')\in \Phi_E$ because of $ (x',z')\in W'\in \bigcup_{\widetilde{W}\in E}\powerset (\widetilde{W})$, as desired.   		
		\end{proof}
	\end{lemma}

	The above lemma asserts that any $\Sigma_{ucr}$-controllability set brings a solution to the $ (G,R)$-similarity control problem with required events
	 through the operator $\mathcal{A} (\cdot)$. In the following, we intend to show that any solution also induces a $\Sigma_{ucr}$-controllability set. To this end, the operator $E (\cdot)$ is introduced, which transforms a cc-simulation relation into a controllability set. Given $\Phi\!:S||G\sqsubseteq_{cc} R$, $E(\Phi)$ is a collection indexed by $Y$, and each element $\theta_y$ of $E(\Phi)$ consists of pairs determined by $y$ and $\Phi$.

	\begin{myDef}\label{def-EPhi}	Let $S$, $G$ and $R$  be  automata and $\Phi\!:S||G\sqsubseteq_{cc} R$. The set $E (\Phi)$ is defined as
		$E (\Phi)=\{\theta_y:y\in Y\}$, where\vspace{-4pt}		
		$$\theta_y=\{ (x,z): ( (y, x),z)\in \Phi \text{ and } (y,x) \text{ is reachable in } S||G\}\vspace{-4pt} $$\noindent\text{ for each }$y\in Y$. 
	\end{myDef} 
	
	Roughly speaking, the above definition stems from the fact that, under suitable conditions, any set like $\theta_y$ has a $\sigma$-match like $\theta_{y'}$ (i.e.\ $match_{G,R}(\theta_y,\sigma,\theta_{y'})$ holds), which is the most non-trivial property involved in the notion of a $\Sigma_{ucr}$-controllability set.
	Fortunately, the following lemma asserts that the set $E (\Phi)$ is indeed a $\Sigma_{ucr}$-controllability set  from $G$ to $R$ for each automaton $S$ and  $\Phi:S||G \sqsubseteq_{cc} R$.
	
	\begin{lemma}\label{lemma-S implies exists ES}
		Given the plant $G$ and specification $R$, if $S$ is a $\Sigma_{uc}$-admissible supervisor and $\Phi:S||G\sqsubseteq_{cc} R$, then $E(\Phi)$  is a $\Sigma_{ucr}$-controllability set from $G$ to $R$.\\
		
		\begin{proof}
			Let $S=(Y,\Sigma,\longrightarrow,Y_0)$. Since $Y_0\neq \emptyset$, we can choose arbitrarily and fix a state $y_0\in Y_0$. Clearly, $\theta_{y_0}\in E (\Phi)$. Moreover, for  any
			$x_0\in X_0$,
			due to $ (y_0,x_0)\in Y_0\times X_0$ and $\Phi: S||G\sqsubseteq_{cc} R$, there exists $z_0\in Z_0$ such that $ ( (y_0,x_0),z_0)\in \Phi$, and hence $ (x_0,z_0)\in \theta_{y_0}$. Thus, the set $\theta_{y_0}$ satisfies the condition  $\forall x_0\in X_0 \, \exists z_0\in Z_0 \, ( (x_0,z_0)\in \theta_{y_0})$, and hence $\theta_{y_0}$ realizes  the  clause  (istate) in Definition \ref{def-controlability}. 
			It remains to show that $E (\Phi)$ satisfies the condition $ (\ref{def-controlability}-a) $ and $ (\ref{def-controlability}-b) $ in Definition \ref{def-controlability}.

			$ (\ref{def-controlability}-a)$ Let $W\in E (\Phi)$ and $\sigma\in \Sigma_{uc}$. Hence,	$W= \theta_y$ for some $y\in Y$.
			If $x\stackrel{\sigma}{\not\longrightarrow}$ for each $ (x,z)\in W$, then we set $W'= W$, and it is easy to see that $match_{G,R}(W,\sigma,W')$  holds trivially. 
			Next we consider the other case that $x\stackrel{\sigma}{\longrightarrow}$ for some $ (x,z)\in W$. We choose arbitrarily and fix such a pair $ (x,z)\in W$. Then $ ( (y, x),z)\in \Phi$ and $(y, x)$ is reachable in $S||G$.	Further, since $S$ is  $\Sigma_{uc}$-admissible, by Definition~\ref{def-admissible uc}, $y\stackrel{\sigma}{\longrightarrow}$ due to $\sigma\in \Sigma_{uc}$ and $x\stackrel{\sigma}{\longrightarrow}$. Thus, we can choose arbitrarily and fix  a state $y^*\in Y$ with $y\stackrel{\sigma}{\longrightarrow}y^*$ and put\vspace{-2pt}
			$$W'= \theta_{y^*}.\vspace{-2pt}$$ 
			Clearly, $W'\in E (\Phi)$.	
			In the following, we intend to verify $match_{G,R} (W,\sigma,W')$. Let  $ (u,v)\in W$ with $u\stackrel{\sigma}{\longrightarrow}u^*$. Thus, $ ( (y, u),v)\in \Phi$ and $(y, u)$ is reachable in $S||G$. Then $ (y,u)\stackrel{\sigma}{\longrightarrow} (y^*,u^*)$ due to $y\stackrel{\sigma}{\longrightarrow}y^*$ and $u\stackrel{\sigma}{\longrightarrow}u^*$. Further, it follows from  $ ( (y, u),v)\in \Phi$ and $\Phi: S||G\sqsubseteq_{cc} R$ that $ ( (y^*, u^*),v^*)\in \Phi$ for some $v^*$ with $v\stackrel{\sigma}{\longrightarrow}v^*$, and thus we have $ (u^*,v^*)\in \theta_{y^*}$ $ (=W')$, as desired.\\\mbox{}

			$ (\ref{def-controlability}-b) $	Let $W\in E (\Phi)$, $ (x,z)\in W$,  $z\stackrel{\sigma}{\longrightarrow}z_1$ and $\sigma\in\Sigma_r$. Then $W= \theta_y$ for some $y\in Y$. Thus,
			$ (x,z)\in \theta_y$, and hence $ ( (y, x),z)\in \Phi$. Since $ ( (y,x),z)\in \Phi$, $\sigma\in \Sigma_r$ and $z\stackrel{\sigma}{\longrightarrow}z_1$, we have $ (y,x)\stackrel{\sigma}{\longrightarrow} (y_1,x_1)$ for some $ (y_1,x_1)\in Y\times X$ with $ ( (y_1,x_1),z_1)\in \Phi $. Thus, $ (x_1,z_1)\in \theta_{y_1}$.
			Put\vspace{-2pt}
			$$W'= \theta_{y_1}.\vspace{-2pt}$$  		 
			Clearly, $x\stackrel{\sigma}{\longrightarrow}x_1$ and $ (x_1,z_1)\in W'\in E (\Phi)$.
			To complete this proof, it is enough to show  $match_{G,R} (W,\sigma,W')$.
			Let $ (u,v)\in W  (=\theta_y)$ and $u\stackrel{\sigma}{\longrightarrow}u_1$. Hence, $ ( (y,u),v)\in \Phi$, and $ (y,u)\stackrel{\sigma}{\longrightarrow} (y_1,u_1)$ because of $y\stackrel{\sigma}{\longrightarrow}y_1$ and $u\stackrel{\sigma}{\longrightarrow}u_1$. Further, due to $ ( (y,u),v)\in \Phi$, we get $v\stackrel{\sigma}{\longrightarrow}v_1$ for some  $v_1\in Z$ with $ ( (y_1,u_1),v_1)\in \Phi$. Clearly, $ (u_1,v_1)\in W'$, as desired.
		\end{proof}
	\end{lemma}
	
	Now we arrive at the main result of this section, which provides   a  necessary and sufficient condition for the solvability  of  the similarity control problem with required events.
	
	\begin{theorem}\label{th-uc admissbile is sufficient}
		Let $G$ and $R$
		be two automata. There exists a $\Sigma_{uc}$-admissible supervisor $S$ such that $S||G\sqsubseteq_{cc}   R$ iff there exists a $\Sigma_{ucr}$-controllability set $E$ from $G$ to $R$.\\
		
		\begin{proof}
			This immediately follows from Lemma \ref{lemma- E exists implies AE is a solution} and \ref{lemma-S implies exists ES}.
		\end{proof}
	\end{theorem}

	Given  $W,W'\in E\subseteq \powerset (X\times Z)$ and
	$\sigma\in \Sigma$, the time complexity of verifying $match_{G,R}(W,\sigma,W')$ is
	$O(|X|^2|Z|^2)$. Hence, for each $W\mathbin\in E$, the time complexity of testing  whether $W$ satisfies the clauses $ (\ref{def-controlability}-a)$ (or, $ (\ref{def-controlability}-b))$ is
	$$O(|X|^2|Z|^2\times|\Sigma_{uc}|\times|E|)$$ $$(\text{resp.\ }
	O(|X|^2|Z|^2\times|\Sigma_r|\times|E|\times |X|^2|Z|^2)).$$
	
	\noindent Thus the time complexity of verifying whether  $W$ satisfies the clauses 
	$ (\ref{def-controlability}-a)$ and $ (\ref{def-controlability}-b)$ is
	$O(|X|^4|Z|^4\times|(\Sigma_{uc}+\Sigma_r)|\times|E|).$
	Since there are at most $2^{|X||Z|}$ elements in $E$, the time complexity of deciding the existence of a solution for a $ (G,R)$-similarity control problem with required events is at most
	$$	O (|X|^4|Z|^4\times(|\Sigma|+|\Sigma_r|) \times 2^{2|X||Z|}). $$

	\begin{example}
		Consider the motivating example shown in Figure \ref{figure-motivating1} with $\Sigma_{uc}=\{start, scan, put, cancel, pay\}$, $\Sigma_{c}=\linebreak[4]\{next\}$ and $\Sigma_r=\{cancel\}$. Assume that 
		there  exists a $\Sigma_{ucr}$-controllability set $E$ from $G$ to $R$. Then we have $(x_0,z_0)\in W_0\in E$ for some $W_0$, and hence there exists $W\in E$ such that $(x_1,z_1)\in W$ and $match_{G,R}(W_0,start,W)$ due to the clause $(\ref{def-controlability}-a)$ in Definition \ref{def-controlability}. Further, since $scan\in \Sigma_{uc}$, due to $(\ref{def-controlability}-a)$ in  Definition \ref{def-controlability} again, there exists $W_1\in E$ such that $match_{G,R} (W,scan,W_1)$. For any such $W_1\in E$, it follows from Definitions \ref{def-the predication} and \ref{def-controlability} that $ (x_2,z_2),  (x_3,z_2)\in W_1$.	 
		However, for $(x_3,z_2)\in W_1$ and $z_2\stackrel{cancel}{\longrightarrow}z_4$ with $cancel\in \Sigma_r$, the clause $ (\ref{def-controlability}-b)$ does not hold for $W_1$ due to $x_3\,\,\not\!\!\stackrel{cancel}{\longrightarrow}$.  Thus, there is no $\Sigma_{ucr}$-controllability set from $G$ to $R$, and hence this $ (G,R)$-similarity control problem with required events is unsolvable.
	\end{example}
	
	\vspace{-2ex}
	
	\begin{example}		
		Consider the plant $G$, specification $R$ and supervisor $\mathcal{A} (E)$ in Example \ref{example-construction of SE}.  The reachable part of the supervised plant $\mathcal{A} (E)||G$ is presented graphically in Figure \ref{exampleoftheorem2}.
		Clearly, $\mathcal{A} (E)$ is a $\Sigma_{uc}$-admissible supervisor and $\mathcal{A} (E)||G\sqsubseteq_{cc} R$.

		\begin{figure}
			\centering
			\centering
			\begin{tikzpicture}[				
					->, >=stealth,
			every node/.style={align=center, inner sep=2pt},
			xnode/.style={draw, ellipse, minimum height=0.7cm, minimum width=1.4cm}, 
			every path/.style={thick},
			font=\scriptsize
			]
				
				\node[xnode] (W0x0) at (0, 0)    {$(W_0, x_0)$};
				\node[xnode] (W1x1) at (0, -1.2)   {$(W_1, x_1)$};
				\node[xnode] (W2x2) at (-1.5, -2.4) {$(W_2, x_2)$};
				\node[xnode] (W2x3) at ( 1.5, -2.4) {$(W_2, x_3)$};
				\node[xnode] (W3x4) at (0, -3.6)    {$(W_3, x_4)$};
				
				\draw (W0x0) -- (W1x1) node[midway, right] {c};
				\draw (W1x1) -- (W2x2) node[midway, left] {uc$_1$};
				\draw (W1x1) -- (W2x3) node[midway, right] {uc$_1$};
				\draw (W2x2) to[bend left]  node[midway, right] {$l$} (W3x4);
				\draw (W2x2) to[bend right] node[midway, left]  {uc$_1$}    (W3x4);
				\draw (W2x3) to[bend left]  node[midway, right] {$l$}    (W3x4);
				
				\coordinate (belowW2x3) at ($(W3x4) + (2.4, 0)$);
				\coordinate (rightW0x0) at ($(W0x0) + (2.4, 0)$);
				
				\draw (W3x4)
				-- (belowW2x3)
				-- node[midway, right]{uc$_2$}
				(rightW0x0)
				-- (W0x0.east);
				
			\end{tikzpicture}

			\caption{The reachable part of supervised system $\mathcal{A} (E)||G$}
			\label{exampleoftheorem2}
		\end{figure}
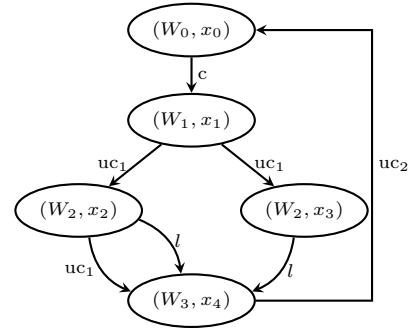
	\end{example}

	In \cite{2020Synthesis,wang2024more}, it has been shown that there exists a $\Sigma_{uc}$-admissible supervisor $S$ such that $S||G\sqsubseteq R$ if and only if $G\sqsubseteq_{uc}R$, which characterizes the solvability of similarity control problem in terms of the coinductive notion of $\Sigma_{uc}$-simulation. For similarity control problems with required events as considered in this paper, inspired by this result, a reasonable conjecture arises, namely that the notion of $\Sigma_{ucr}$-simulation, which is derived from the notion of $\Sigma_{uc}$-simulation and addresses the $\Sigma_r$-requirement by adding the clause ($\Sigma_r$-backward) from Definition \ref{def-simulation},  can also be used to characterize the solvability of these problems. In 
	the remaining part of this section we will explain that, except for the case of deterministic automata, this notion alone is not sufficient to characterize the solvability of similarity-control problems with required events. We begin with making the following  observation, which follows from Definitions \ref{def-simulation} and \ref{def-controlability}.

	\begin{observation}\label{obs-control set implies ucr simulation}
		Given two  automata  $G$ and $R$, if $E$ is a $\Sigma_{ucr}$-controllability set from $G$ to $R$, then $\bigcup E$ is a $\Sigma_{ucr}$-simulation relation from $G$ to $R$.\\
		
		\begin{proof}
		 Since $E$ is a $\Sigma_{ucr}$-controllability set from $G$ to $R$, by the clause (istate) in Definition \protect\ref{def-controlability}, we have
			$$\forall x_0\in X_0 \, \exists z_0\in Z_0  ( (x_0,z_0)\in W_0)  $$ 
			
			\noindent for some $W_0\in E\subseteq \bigcup E$, which implies that the condition (initial state) in Definition \protect{\ref{def-simulation}} holds. It remains to verify ($\Sigma_{uc}$-forward) and ($\Sigma_r$-backward) in turn. Let $(x,z)\in \bigcup E$. Then $(x,z)\in W$ for some $W\in E$.
			
			($\Sigma_{uc}$-forward) Assume that $x\stackrel{\sigma}{\longrightarrow}x'$ and $\sigma\in \Sigma_{uc}$.  Since $E$ is a $\Sigma_{ucr}$-controllability set and $(x,z)\in W\in E$, it follows from the clause $ (\ref{def-controlability}-a) $ in Definition  \protect\ref{def-controlability} that $match_{G,R}(W,\sigma,\!W')$ for some $W'\in E$, that is, there exists $z'\in Z$ such that $z\stackrel{\sigma}{\longrightarrow}z'$ and $(x',z')\in W'\subseteq \bigcup E$.
			
			($\Sigma_r$-backward) Assume that $z\stackrel{\sigma}{\longrightarrow}z'$ and $\sigma\in \Sigma_r$. By $(x,z)\in W\in E$ and the clause $ (\ref{def-controlability}-b) $ in Definition \protect\ref{def-controlability}, there exists $x'\in X$ and $W'\in E$ such that $x\stackrel{\sigma}{\longrightarrow}x'$ and $(x',z')\in W'\subseteq \bigcup E$, as desired.	
		\end{proof}
	\end{observation}
	
	Thus, a $\Sigma_{ucr}$-controllability set $E$ is a cover\footnote{Given a set $X$, a set $E\subseteq \powerset(X)$ is a cover of $X$ if $X=\bigcup E$.} of the 
	$\Sigma_{ucr}$-simulation relation $\bigcup E$ which satisfies the conditions  (istate), $ (\ref{def-controlability}-a)$ and $ (\ref{def-controlability}-b)$ in Definition \ref{def-controlability}. Hence,  Theorem \ref{th-uc admissbile is sufficient} can be expressed equivalently as: there exists a $\Sigma_{uc}$-admissible supervisor such that $S||G\sqsubseteq_{cc} R$ iff there is a $\Sigma_{ucr}$-simulation relation from $G$ to $R$ which has a cover satisfying  (istate), $ (\ref{def-controlability}-a)$ and $ (\ref{def-controlability}-b)$.	
	The next example illustrates that not every $\Sigma_{ucr}$-simulation has such a cover.
	
	\begin{example}\label{example-cover doest not always exist}
		Consider the plant $G=(X,\Sigma,\longrightarrow,\{x_0\})$ and specification $R=(Z,\Sigma,\longrightarrow,\{z_0\})$ shown in Figure~\ref{exampleofnondetermin} with $\Sigma=\Sigma_r=\{l,l_1\}$, $\Sigma_{c}=\{l\}$ and $\Sigma_{uc}=\{l_1\}$.
	
		\begin{figure}[!ht]
			\centering
			\begin{minipage}[t]{0.2\textwidth}
				\centering
				\begin{tikzpicture}[
					->, >=stealth,
					every node/.style={minimum size=0.5cm, align=center},
					xnode/.style={draw, circle},
					every path/.style={thick},
					font=\scriptsize
					]
					
					\node[xnode] (x0) at (0, 0)     {$x_{0}$};
					\node[xnode] (x1) at (-1.2, -1.2) {$x_{1}$};
					\node[xnode] (x2) at (1.2, -1.2)  {$x_{2}$};
					\node[xnode] (x3) at (-1.2, -2.4) {$x_{3}$};
					
					\draw (x0) -- (x1) node[midway, left]  {$l$};
					\draw (x0) -- (x2) node[midway, right] {$l$};
					\draw (x1) -- (x3) node[midway, right] {$l_1$};
				\end{tikzpicture}
			\end{minipage}
			\hfill
			\begin{minipage}[t]{0.2\textwidth}
				\centering
				\begin{tikzpicture}[
					->, >=stealth,
					every node/.style={minimum size=0.5cm, align=center},
					xnode/.style={draw, circle},
					every path/.style={thick},
					font=\scriptsize
					]
					
					\node[xnode] (z0) at (0, 0)    {$z_{0}$};
					\node[xnode] (z1) at (0, -1.2) {$z_{1}$};
					\node[xnode] (z2) at (0, -2.4) {$z_{2}$};
					
					\draw (z0) -- (z1) node[midway, right] {$l$};
					\draw (z1) -- (z2) node[midway, right] {$l_1$};
				\end{tikzpicture}
			\end{minipage}
			\caption{The plant $G$ (left) and specification $R$ (right)} 
			\label{exampleofnondetermin} 
		\end{figure}
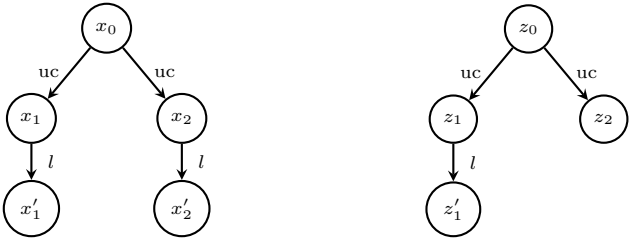
		
		Clearly, there exists a $\Sigma_{ucr}$-simulation relation  from $G$ to $R$, e.g., 	$\Phi=\{(x_0,z_0),(x_1,z_1),(x_3,z_2)\}.$
		Assume that there exists a $\Sigma_{ucr}$-controllability set  $E$ from $G$ to $R$. Then we have $\{(x_0,z_0)\}\in W_0\in E$ for some $W_0$.  Hence, by the clause $(\ref{def-controlability}-b)$ in Definition \ref{def-controlability} and \plat{$z_0\stackrel{l}{\longrightarrow}z_1$} with $l\in \Sigma_r$, there  exists $x\in X$ and $W\in E$ such that \plat{$x_0\stackrel{l}{\longrightarrow}x$}, $(x,z_1)\in W$ and $match_{G,R}(W_0,l,W)$. Now we obtain $(x_1,z_1), (x_2,z_1)\in W\!$.
		However, due to \plat{$x_2\not\stackrel{~l_1}{\longrightarrow}$}, there is no $x'$ and $W'$ realizing the clause $(\ref{def-controlability}-b)$ for $l_1\in \Sigma_r$, $z_1\stackrel{l_1}{\longrightarrow}z_2$ and $(x_2,z_1)\in W\in E$. Thus, there is no $\Sigma_{ucr}$-controllability set $E$, that is, not every $\Sigma_{ucr}$-simulation from $G$ to $R$ has a cover satisfying (istate), $(\ref{def-controlability}-a)$ and $(\ref{def-controlability}-b)$.
	\end{example}
	
	However, for $\Sigma_{ucr}$-simulations between deterministic automata, covers satisfying  (istate), $ (\ref{def-controlability}-a)$ and $ (\ref{def-controlability}-b)$ always exist. The notion below strengthens the notion of $\Sigma_{ucr}$-simulation by imposing an additional requirement on the $\sigma$-forward matching for $\sigma\in\Sigma_r$.
	
	\begin{myDef}
		$\!$A  $\Sigma_{ucr}$-simulation $\hspace{-1pt}\Phi$ is said to be \emph{uniform} w.r.t.\ $\Sigma_r$ if, for any $ (x_i,z_i)\in \Phi$ $ (i=1,2)$ such that  $x_1$, $x_2$, $z_1$ and $z_2$ are $s$-reachable in $G$ and $R$ respectively for some $s\in \Sigma^*$, 
		$$\forall\sigma\in \Sigma_r\,\forall x_2' (z_1\stackrel{\sigma}{\longrightarrow} \text{ and } x_2\stackrel{\sigma}{\longrightarrow}x_2' \Longrightarrow$$ $$\exists z_2' (z_2\stackrel{\sigma}{\longrightarrow}z_2' \text{ and }  (x_2',z_2')\in \Phi )).$$
	\end{myDef}

    The next lemma shows that a $\Sigma_{ucr}$-simulation, which is uniform w.r.t. $\Sigma_r$, can induce a  $\Sigma_{ucr}$-controllability set.
	
	\begin{lemma}\label{lemma-ucr simulation implies control set}
		Let $G$ and $R$ be two automata. If $\Phi:G\sqsubseteq_{ucr} R$ is uniform w.r.t.\ $\Sigma_r$, then there exists a $\Sigma_{ucr}$-controllability set from $G$ to $R$.
		
		\begin{proof}
			For any $s\in \Sigma^*$, put $E=\{W_s:s\in \Sigma^*\}$ and
			$$W_s= \{ (x,z)\in \Phi: x,z \text{ are }s \text{-reachable in } G \text{ and }R \text{ resp.}\}.$$
			We intend to show that $E$ is  a $\Sigma_{ucr}$-controllability set from $G$ to $R$.	Clearly, $\forall x_0\in X_0 \, \exists z_0\in Z_0\linebreak[2] ( (x_0,z_0)\in W_\epsilon\in E)$. Thus, $E$ satisfies  (initial state) in Definition \ref{def-controlability}. 
			We check the conditions $ (\ref{def-controlability}-a,b) $ in turn.
			
			$ (\ref{def-controlability}-a) $ For any $s\in \Sigma^*$ and $\sigma\in \Sigma_{uc}$, it is easy to see that $match_{G,R} (W_s,\sigma, W_{s\sigma})$ because $\Phi$ satisfies the condition  ($\Sigma_{uc}$-forward)  in Definition \ref{def-simulation}.

			$ (\ref{def-controlability}-b) $ Let $ (x,z)\in W_s$, $\sigma\in \Sigma_r$ and $z'\in Z$ with $z\stackrel{\sigma}{\longrightarrow}z'$. Then $ (x,z)\in \Phi$ and $x,z$ are $s$-reachable in $G$ and $R$.
			By the condition  ($\Sigma_r$-backward) in Definition \ref{def-simulation}, $x\stackrel{\sigma}{\longrightarrow}x'$ and $ (x',z')\in \Phi$ for some $x'\in X$, and hence $ (x',z')\in W_{s\sigma}$. Assume that $ (u,v)\in W_s$ and $u\stackrel{\sigma}{\longrightarrow}u'$. Then $ (u,v)\in \Phi$.
			Since $\Phi$ is uniform and $z\stackrel{\sigma}{\longrightarrow}z'$, we get $v\stackrel{\sigma}{\longrightarrow}v'$ and $ (u',v')\in \Phi$ for some $v'\in Z$. Thus $ (u',v')\in W_{s\sigma}$ because $u'$ and $v'$ are $s\sigma$-reachable in $G$ and $R$ respectively, which implies that $match_{G,R} (W_s,\sigma,W_{s\sigma})$ holds.
		\end{proof}
	\end{lemma}

	The following example shows that $\bigcup E$ needs be not uniform w.r.t.\ $\Sigma_r$ even if $E$ is a $\Sigma_{ucr}$-controllability set.
	
	\begin{example}
		Consider the plant $G=(X,\Sigma,\longrightarrow,\{x_0\})$ and $R=(Z,\Sigma,\longrightarrow,\{z_0\})$ shown in Figure~\ref{exampleofuniform} with $\Sigma_{uc}=\{\mbox{uc}\}$ and  $\Sigma_{c}=\Sigma_r=\{l\}$. It is easy to check that 
		\begin{align*}\nonumber			E=\{&\{(x_0,z_0)\},\{(x_1,z_1),(x_2,z_1)\},\\&\{(x_1',z_1'),(x_2',z_1')\},\{(x_2,z_2)\}\}
		\end{align*}

		\noindent is a $\Sigma_{ucr}$-controllability set from $G$ to $R$.

\begin{figure}[!ht]
	\centering
	\begin{minipage}[t]{0.2\textwidth}
		\centering
		\begin{tikzpicture}[
			->, >=stealth,
			every node/.style={minimum size=0.5cm, align=center},
			xnode/.style={draw, circle},
			every path/.style={thick},
			font=\scriptsize
			]
			
			\node[xnode] (x0) at (0, 0)      {$x_{0}$};
			\node[xnode] (x1) at (-1.0, -1.2) {$x_{1}$};
			\node[xnode] (x2) at (1.0, -1.2)  {$x_{2}$};
			\node[xnode] (x1') at (-1.0, -2.4) {$x_{1}'$};
			\node[xnode] (x2') at (1.0, -2.4)  {$x_{2}'$};
			
			\draw (x0) -- (x1) node[midway, left]  {uc};
			\draw (x0) -- (x2) node[midway, right] {uc};
			\draw (x1) -- (x1') node[midway, right] {$l$};
			\draw (x2) -- (x2') node[midway, right] {$l$};
		\end{tikzpicture}
	\end{minipage}
	\hfill
	\begin{minipage}[t]{0.2\textwidth}
		\centering
		\begin{tikzpicture}[
			->, >=stealth,
			every node/.style={minimum size=0.5cm, align=center},
			xnode/.style={draw, circle},
			every path/.style={thick},
			font=\scriptsize
			]
			
			\node[xnode] (z0) at (0, 0)      {$z_{0}$};
			\node[xnode] (z1) at (-1.0, -1.2) {$z_{1}$};
			\node[xnode] (z2) at (1.0, -1.2)  {$z_{2}$};
			\node[xnode] (z1') at (-1.0, -2.4) {$z_{1}'$};
			
			\draw (z0) -- (z1) node[midway, left]  {uc};
			\draw (z0) -- (z2) node[midway, right] {uc};
			\draw (z1) -- (z1') node[midway, right] {$l$};
		\end{tikzpicture}
	\end{minipage}
	\caption{The plant $G$ (left) and specification $R$ (right)} 
	\label{exampleofuniform} 
\end{figure}

		By Observation \ref{obs-control set implies ucr simulation}, $\bigcup E$ is a  $\Sigma_{ucr}$-simulation relation from $G$ to $R$. However, for the $uc$-reachable states $x_1,x_2,z_1,z_2$, since $(x_1,z_1)$, $(x_2,z_2)\mathbin\in \bigcup E$, \plat{$z_1\!\stackrel{l}{\longrightarrow}\! \text{ and } x_2\stackrel{l}{\longrightarrow}x_2'$} but \plat{$z_2\not\!\stackrel{~l}{\longrightarrow}$}, $\bigcup E$ is not uniform w.r.t.\ $\Sigma_r$.
	\end{example}
	\vspace{-3ex}

	\begin{lemma}\label{lemma-determin implies uniform}
		For any deterministic automata $G$ and $R$, if $\Phi:G\sqsubseteq_{ucr}R$ then $\Phi$ is uniform w.r.t.\ $\Sigma_r$.\\
		
		\begin{proof}
			This immediately follows from the fact that, for any $s\in \Sigma^*$, there exists at most one $s$-reachable state in each deterministic automaton.
		\end{proof}
	\end{lemma}
	
	The following corollary follows immediately from Observation \ref{obs-control set implies ucr simulation}, Lemma \ref{lemma-ucr simulation implies control set} and \ref{lemma-determin implies uniform}  and  Theorem \ref{th-uc admissbile is sufficient}. It provides another necessary and sufficient condition for the  solvability of the similarity control problem with required events under the circumstance that the relevant  plant and specification  are deterministic. Example \ref{example-cover doest not always exist} indicates that this restriction is necessary.

	\begin{corollary}\label{cor-control set is the same as ucr simulation}
		For any deterministic automata $G$  and $R$, the $ (G,R)$-similarity control problem with required events is solvable iff $G\sqsubseteq_{ucr} R$.
	\end{corollary}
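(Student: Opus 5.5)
The corollary follows by chaining Theorem \ref{th-uc admissbile is sufficient} with Observation \ref{obs-control set implies ucr simulation}, Lemma \ref{lemma-ucr simulation implies control set} and Lemma \ref{lemma-determin implies uniform}. By Theorem \ref{th-uc admissbile is sufficient}, solvability of the $(G,R)$-similarity control problem with required events is equivalent to the existence of a $\Sigma_{ucr}$-controllability set from $G$ to $R$. So it suffices to show that, for deterministic $G$ and $R$, such a set exists iff $G\sqsubseteq_{ucr}R$.

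For the \emph{only if} direction, suppose the problem is solvable. Theorem \ref{th-uc admissbile is sufficient} then yields a $\Sigma_{ucr}$-controllability set $E$ from $G$ to $R$. By Observation \ref{obs-control set implies ucr simulation}, $\bigcup E$ is a $\Sigma_{ucr}$-simulation relation from $G$ to $R$, so $G\sqsubseteq_{ucr}R$. This direction does not use determinism at all.

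For the \emph{if} direction, let $\Phi: G\sqsubseteq_{ucr}R$ with $G$ and $R$ deterministic. By Lemma \ref{lemma-determin implies uniform}, $\Phi$ is uniform w.r.t.\ $\Sigma_r$, and Lemma \ref{lemma-ucr simulation implies control set} then gives a $\Sigma_{ucr}$-controllability set from $G$ to $R$. Theorem \ref{th-uc admissbile is sufficient} converts this into a $\Sigma_{uc}$-admissible supervisor $S$ with $S||G\sqsubseteq_{cc}R$.

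The only delicate point is Lemma \ref{lemma-determin implies uniform}, and I would check it explicitly. Determinism means $|X_0|=|Z_0|=1$ and each state has at most one $\sigma$-successor, so for each $s$ there is at most one $s$-reachable state in each automaton. Hence in the uniformity condition $x_1=x_2$ and $z_1=z_2$. The required implication then reads: $z_1\stackrel{\sigma}{\longrightarrow}$ (say $z_1\stackrel{\sigma}{\longrightarrow}z'$) and $x_1\stackrel{\sigma}{\longrightarrow}x_2'$ imply $z_1\stackrel{\sigma}{\longrightarrow}z_2'$ with $(x_2',z_2')\in\Phi$. Apply ($\Sigma_r$-backward) to $(x_1,z_1)\in\Phi$ and $z_1\stackrel{\sigma}{\longrightarrow}z'$: this gives some $x''$ with $x_1\stackrel{\sigma}{\longrightarrow}x''$ and $(x'',z')\in\Phi$. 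By determinism of $G$, $x''=x_2'$, so $z_2'=z'$ works. Apart from this check, the argument is pure bookkeeping.
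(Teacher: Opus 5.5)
Your proposal is correct and follows exactly the paper's route: the paper derives the corollary directly from Theorem~\ref{th-uc admissbile is sufficient}, Observation~\ref{obs-control set implies ucr simulation} and Lemmas~\ref{lemma-ucr simulation implies control set} and~\ref{lemma-determin implies uniform}, and the only-if direction indeed does not use determinism. Your explicit check of Lemma~\ref{lemma-determin implies uniform} is also sound. You collapse to $x_1=x_2$, $z_1=z_2$, then use ($\Sigma_r$-backward) and determinism of $G$ to identify the $\sigma$-successor. This fills in the one-line justification the paper gives.
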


	\section{Synthesis of Maximally Permissive Supervisors}\label{sec:synthesis}

This section intends to provide a method for synthesizing a maximally permissive  supervisor  (w.r.t.\  the  simulation relation) for the similarity control problem with required events.
     Let $\MSPR (G,R)$ be the set of all maximally
	permissive supervisors  for  the similarity control
	problem with required events, formally,
	\begin{align*}
		\MSPR (G, R)= \{&S\in \SPR (G, R): \forall S'\in \SPR (G, R)\\
		&(S'||G\sqsubseteq S||G)\}.
	\end{align*}	
	\noindent Note that, as usual, we adopt the classical notion of simulation to characterize the relative fineness of solutions.

     To synthesize the maximally permissive supervisor, a fixpoint characteristic of $\Sigma_{ucr}$-controllability sets is provided first.
    Then  the candidate maximally permissive supervisor is constructed by using the operator $\mathcal{A}(\cdot)$ given in Definition \ref{def-supervisor SE} and the fixpoint.

	\begin{myDef}\label{def-function}
		Let $G$ and  $R$ be automata. A fixpoint characterization function $F_{ (G,R)}:\powerset (\powerset (X\times Z))\longrightarrow \powerset (\powerset (X\times Z))$ is defined by, for each $E\subseteq \powerset (X\times Z)$ and $W\subseteq X\times Z$, $W\in F_{ (G,R)} (E)$ iff $W\in E$ and $W$ satisfies the following conditions:
		
		\noindent (1) $\forall\sigma\in \Sigma_{uc} \, \exists W'\in E \, ( match_{G,R} (W,\sigma,W') )$;
		
		\noindent (2) $\forall  (x,z)\in W \, \forall\sigma\in \Sigma_r \, \forall z'\in Z \, (z\stackrel{\sigma}{\longrightarrow}z'\Longrightarrow\exists x'\mathbin\in X$ \, $\exists W'\in E \, (
		x\stackrel{\sigma}{\longrightarrow}x',   (x',z')\in W'\text{ and }match_{G,R} (W,\sigma,W') ).$	  
	\end{myDef}
	
	 It is not difficult to see that $W\in F_{ (G,R)} (E)$ if and only if $W$ satisfies the clauses $(\ref{def-controlability}-a)$ and $(\ref{def-controlability}-b)$ in Definition \ref{def-controlability}.
	Clearly, $F_{ (G,R)}$ is a monotonic function over the complete lattice \allowbreak $ (\powerset (\powerset (X\times Z)), \subseteq)$, that is, for any $E, E'\subseteq \powerset (X\times Z)$,
	$$E\subseteq E' \Longrightarrow F_{ (G,R)} (E)\subseteq F_{ (G,R)} (E').$$
	Hence, by Tarski's fixpoint theorem \cite{tarski1955lattice}, there exists the greatest fixpoint of the function $F_{ (G,R)}$, denoted by $E_{ (G,R)}^\uparrow$.

The following conclusion reveals the connection between  $\Sigma_{ucr}$-controllability sets and fixpoints of $F_{ (G,R)}$.

	\begin{proposition}\label{prop-fixpoint}
		Let $G$ and  $R$ be two automata. 
		
		\noindent$ (1)$ For any $E\subseteq \powerset (X\times Z)$,  $E$ is a $\Sigma_{ucr}$-controllability set from $G$ to $R$ iff $F_{ (G,R)} (E)=E$ and there exists  $W_0\in E$  such that\vspace{-3pt}
		$$\forall x_0\in X_0 \, \exists z_0\in Z_0 \  ( (x_0,z_0)\in W_0).$$
		
		\noindent$ (2)$ There exists a $\Sigma_{ucr}$-controllability set from $G$ to $R$ iff there exists  $W_0\in E_{ (G,R)}^\uparrow$ such that\vspace{-3pt}
		$$ \forall x_0\in X_0 \, \exists z_0\in Z_0 \  ( (x_0,z_0)\in W_0).$$

		\begin{proof}
			The first clause immediately follows from Definitions \ref{def-controlability} and \ref{def-function}. We prove the second one. 
			
			($\Longleftarrow$) Since $E_{ (G,R)}^\uparrow$ is 
	        the		
			greatest fixpoint of $F_{ (G,R)}$, we get $F_{ (G,R)}(E_{ (G,R)}^\uparrow)=E_{ (G,R)}^\uparrow$. Then by $(1)$, $E_{ (G,R)}^\uparrow$ is  a $\Sigma_{ucr}$-controllability set from $G$ to $R$. 
			
			 	($\Longrightarrow$)		 Let $E$ be a $\Sigma_{ucr}$-controllability set from $G$ to $R$. Then it follows from $(1)$ that $F_{ (G,R)} (E)=E$, and hence $E\subseteq E_{ (G,R)}^\uparrow$.
			Due to the clause (istate) in Definitions \ref{def-controlability},  there exists  $W_0\in E$ such that
			$\forall x_0\in X_0 \, \exists z_0\in Z_0 \  ( (x_0,z_0)\in W_0).$ 
		Thus, $W_0\in E_{ (G,R)}^\uparrow$.
		\end{proof}
	\end{proposition}
	
	The next corollary provides some
     properties of the greatest fixpoint of $F_{ (G,R)}$ whenever $\SPR(G,R)\neq\emptyset$,  which will be used later.
	
	\begin{corollary}\label{corollary- fixpoint is control set}  Let $G$ be a plant and $R$ a specification.
    
   \noindent $ (1)$ $\SPR (G,R)\mathbin{\neq}\emptyset$ iff  there exists  $W_0\mathbin\in E_{ (G,R)}^\uparrow$  such that   
		$\forall x_0\in X_0 \, \exists z_0\in Z_0 \  ( (x_0,z_0)\in W_0).$
				
		\noindent $(2)$	$E_{ (G,R)}^\uparrow=\bigcup_{W\in E_{ (G,R)}^\uparrow}\powerset (W)$ whenever $\SPR (G,R)\neq \emptyset$.\\
		
		\begin{proof}
			$ (1)$ By Proposition \ref{prop-fixpoint} and Theorem \ref{th-uc admissbile is sufficient}, $\SPR (G,R)\neq\emptyset$ iff $E_{ (G,R)}^\uparrow$ is a $\Sigma_{ucr}$-controllability set from $G$ to $R$, and thus $ (1)$ holds. 	
			
			$ (2)$ Due to  $\SPR (G,R)\neq \emptyset$, by clause $ (1)$ and Proposition \ref{prop-fixpoint},  $E_{ (G,R)}^\uparrow$
			is a $\Sigma_{ucr}$-controllability set from $G$ to $R$. Hence, by Lemma \ref{lemma-E^* is successor}, $\bigcup_{W\in E_{ (G,R)}^\uparrow}\powerset (W)$ is a $\Sigma_{ucr}$-controllability set from $G$ to $R$. Thus, by Proposition \ref{prop-fixpoint},\vspace{-3pt} $$F_{ (G,R)} (\bigcup_{W\in E_{ (G,R)}^\uparrow}\powerset (W))=\bigcup_{W\in E_{ (G,R)}^\uparrow}\powerset (W).\vspace{-3pt}$$ Hence,
			we get\vspace{-3pt}
			$$\bigcup_{W\in E_{ (G,R)}^\uparrow}\powerset (W)\subseteq E_{ (G,R)}^\uparrow.\vspace{-3pt}$$ Moreover, $E_{ (G,R)}^\uparrow\subseteq\bigcup_{W\in E_{ (G,R)}^\uparrow}\powerset (W)$ holds trivially, as desired.
		\end{proof}
	\end{corollary}

	For any $\Sigma_{ucr}$-controllability set $E$, Definition \ref{def-supervisor SE} provides a method for constructing the automaton  $\mathcal{A} (E)$ which is a solution to the similarity control problem with required events. In fact, with $E_{ (G,R)}^\uparrow$, $\mathcal{A} (E_{ (G,R)}^\uparrow)$ is a maximally permissive supervisor	whenever $\SPR (G,R)\neq\emptyset$. We will verify this conclusion below. The following lemma reflects the monotonicity of $\mathcal{A} (\cdot)||G$.

	\begin{lemma}\label{lemma-AE smaller than fixpoint}
	 If both $E_1$ and $E_2$ are $\Sigma_{ucr}$-controllability sets from the automaton $G$ to $R$ with $E_1\subseteq E_2$, then  $\mathcal{A} (E_1)||G\sqsubseteq \mathcal{A} (E_2)||G$.\\
		
		\begin{proof}
			By Definition \ref{def-supervisor SE}, we may assume that\vspace{-3pt} $$\mathcal{A} (E_i)= (\bigcup_{\widetilde{W}\in E_i}\powerset (\widetilde{W}),\Sigma,\longrightarrow,I_{E_i})\  (i=1,2).\vspace{-6pt}$$ Since  $E_1\subseteq E_2$, we have $\bigcup_{\widetilde{W}\in E_1}\powerset (\widetilde{W})\subseteq \bigcup_{\widetilde{W}\in E_2}\powerset (\widetilde{W})$.\pagebreak[2]
			Then it is routine to check that the relation $$\Phi\subseteq  (\bigcup_{\widetilde{W}\in E_1}\powerset (\widetilde{W})\times X)\times  (\bigcup_{\widetilde{W}\in E_2}\powerset (\widetilde{W})\times X)\vspace{-3pt}$$ defined as 
			$$ \Phi=\{ ( (W,x), (W,x)): W\in \bigcup_{\widetilde{W}\in E_1}\powerset (\widetilde{W}) \text{ and } x\in X\}$$ 
			is a simulation relation from $\mathcal{A} (E_1)||G$ to $\mathcal{A} (E_2)||G$ by the observation that $I_{E_1}\!\subseteq \!I_{E_2}$ and, for any $W,\!W'\!\in\!\bigcup_{\widetilde{W}\in E_1}\!\powerset (\widetilde{W})$ and $\sigma\in \Sigma$, $W\stackrel{\sigma}{\longrightarrow}W'$ in $\mathcal{A} (E_1)$ iff $W\stackrel{\sigma}{\longrightarrow}W'$ in $\mathcal{A} (E_2)$.\vspace{5pt}
		\end{proof}
	\end{lemma}

		The following lemma shows that the operator $\mathcal{A}(\cdot)$ introduced in Definition \ref{def-supervisor SE} provides a way to construct a cofinal subset of $\SPR(G,R)$. That is, for any $S\in \SPR(G,R)$, we have $S||G\sqsubseteq \mathcal{A} (E)||G$ for some $\Sigma_{ucr}$-controllability set $E$.

	\begin{lemma}\label{lemma-S smaller than AES}
		If $S\in \SPR(G,R)$, then $S||G\sqsubseteq \mathcal{A} (E)||G$ for some $\Sigma_{ucr}$-controllability set $E$ from $G$ to $R$. In particular, $S||G\sqsubseteq \mathcal{A} (E (\Phi))||G$ for any $\Phi:S||G\sqsubseteq_{cc} R$.\\

		\begin{proof}
			Let  $\Phi:S||G\sqsubseteq_{cc} R$. By Lemma \ref{lemma-S implies exists ES}, $E (\Phi)$  (see, Def.\ \ref{def-EPhi}) is a $\Sigma_{ucr}$-controllability set from $G$ to $R$. By Definitions \ref{def-supervisor SE} and \ref{def-EPhi}, we may assume $\mathcal{A} (E (\Phi))= (\bigcup_{y\in Y}\powerset  (\theta_y),\Sigma,\longrightarrow, I_{E (\Phi)})$ with\vspace{-5pt}	
			$$ I_{E (\Phi)}=\{W_0\in \bigcup_{y\in Y}\powerset  (\theta_y): \forall x_0\in X_0 \, \exists z_0\in Z_0\ $$  $$( (x_0,z_0)\in W_0) \text{ and }W_0\subseteq X_0\times Z_0\}.$$
			Here, $\theta_y=\{ (x,z): ( (y,x),z)\in \Phi \text{ and }$ $(y,x)$ is reachable in $S||G$\} for each $y\in Y$.
			It  suffices to show that the relation $\Psi\subseteq  (Y\times X)\times (\bigcup_{y\in Y}\powerset  (\theta_y)\times X)$  defined as  
			\begin{align*}\nonumber
				\Psi=\{& ( (y,x), (W,x)): W\mathbin\subseteq \theta_y\text{ and }
				\exists s\mathbin\in \Sigma^* 	( (y,x) \text{ and }  (W,x) \\ 
				& 				
				\text{ are $s$-reachable in } S||G \text{ and } \mathcal{A} (E (\Phi))||G \text{ resp.})   \}\nonumber
			\end{align*}
			
	\noindent		is a simulation relation from $S||G$ to $\mathcal{A} (E(\Phi))||G$. 
			
			First, it will be verified that $\Psi$ satisfies the condition  (initial state) in Definition \ref{def-simulation}.
			Let $ (y_0,x_0)\in Y_0\times X_0$. For any $x_0'\in X_0$, since $\Phi$ is a cc-simulation relation from $S||G$ to $R$ and $ (y_0,x_0')\in Y_0\times X_0$, there exists $z_0'\in Z_0$ such that $ ( (y_0,x_0'),z_0')\in \Phi$, and hence we can choose arbitrarily and fix such a state  $z_0'$ and denote it as $\Delta (x_0')$. Set  $W_0=\{ (x_0',\Delta (x_0')): x_0'\in X_0\}.$
			Then $W_0\subseteq X_0\times Z_0$ and 	
			$$ \forall x_0'\in X_0 \, \exists z_0'\in Z_0 \  ( (x_0',z_0')\in W_0) \text{ and }  $$ $$ \forall  (x_0',z_0')\in W_0 \, ( ( (y_0,x_0'),z_0')\in \Phi).$$ 
			Hence,  $W_0\subseteq \theta_{y_0}$, $W_0\in \bigcup_{y\in Y}\powerset  (\theta_y)$ and $W_0\in I_{E (\Phi)}$, which implies $ ( (y_0,x_0), (W_0,x_0))\in \Psi$.	
			
			Next, we deal with the condition  ($\Sigma$-forward) in Definition \ref{def-simulation}.
			Let $ ( (y,x), (W,x))\in \Psi$, $ (y,x)\stackrel{\sigma}{\longrightarrow} (y^*,x^*)$ and $\sigma\in \Sigma$. Thus, $W\subseteq \theta_y$ and both $ (y,x)$ and $ (W,x)$ are $s$-reachable for some $s\in \Sigma^*$. Hence,  $ (y_0,x_{01})\stackrel{s}{\longrightarrow} (y,x)$  and $ (W_0,x_{02})\stackrel{s}{\longrightarrow} (W,x)$ for some $y_0\in Y_0$, $x_{01}\in X_0$, $W_0\in I_{E (\Phi)}$ and $x_{02}\in X_0$.  We will show that $W\stackrel{\sigma}{\longrightarrow}W^*$ and $ ( (y^*,x^*), (W^*,x^*))\in \Psi$ for some $W^*\in  \bigcup_{y\in Y}\powerset  (\theta_y)$. To this end,  the claim below is obtained first.\\\mbox{}
			
			\textbf{Claim}  For any $ (x_1,z_1)\in W$ and $x_1'\in X$ with $x_1\stackrel{\sigma}{\longrightarrow}x_1'$, there exists $z_1'\in Z$ such that  $z_1\stackrel{\sigma}{\longrightarrow}z_1'$, $ (x_1',z_1')\in \theta_{y^*}\in E (\Phi)$ and 	$ ( (y^*,x_1'),z_1')\in \Phi$.
			
			Let $ (x_1,z_1)\in W$ and $x_1\stackrel{\sigma}{\longrightarrow}x_1'$. Then 
			$ ( (y,x_1),z_1)\in \Phi$ because of $W\subseteq \theta_y$ and $ (x_1,z_1)\in W$. By Lemma \ref{lemma-about xz in W}, it follows from $W_0\stackrel{s}{\longrightarrow}W$ and $x_1\in \pi_G (W)$ that $x_1$ is  $s$-reachable. Further, due to $y_0\stackrel{s}{\longrightarrow}y$, $ (y,x_1)$ is also  $s$-reachable in  $S||G$.  Moreover, $ (y,x_1)\stackrel{\sigma}{\longrightarrow} (y^*,x_1')$ comes from $ (y,x)\stackrel{\sigma}{\longrightarrow} (y^*,x^*)$ and $x_1\stackrel{\sigma}{\longrightarrow}x_1'$. Hence, it follows from $ ( (y,x_1),z_1)\in \Phi$  that 
			$z_1\stackrel{\sigma}{\longrightarrow}z_1'$ and 	$ ( (y^*,x_1'),z_1')\in \Phi$ for some $z_1'\in Z$. Further, since $(y^*,x_1')$ is reachable in $S||G$,	by  Definition \ref{def-EPhi}, we get $ (x_1',z_1')\in \theta_{y^*}\in E (\Phi)$, as desired. \\\mbox{}

			By the above claim,  for any $ (x_1,z_1)\in W$ and $x_1'\in X$ with $x_1\stackrel{\sigma}{\longrightarrow}x_1'$, we can choose arbitrarily and fix a state $\ast (x_1,z_1,x_1')\in Z$ such that $z_1\stackrel{\sigma}{\longrightarrow}\ast (x_1,z_1,x_1')$, $ (x_1',\ast (x_1,z_1,\\x_1'))\in \theta_{y^*}\in E (\Phi)$ and 	$ ( (y^*,x_1'),\ast (x_1,z_1,x_1'))\in \Phi$. Set\vspace{-5pt} 		 	
			$$W^*= \{ (x_1', \ast (x_1,z_1,x_1')):  (x_1,z_1)\in W \text{ and } x_1\stackrel{\sigma}{\longrightarrow}x_1'  \}.\vspace{-5pt}$$
			Clearly, $W^*\subseteq \theta_{y^*}$ and thus $W^*\in \bigcup_{y\in Y}\powerset  (\theta_y)$. Moreover, it is obvious that\vspace{-5pt}
			$$W^* \subseteq\bigcup_{ (x,z)\in W}\{x':x\stackrel{\sigma}{\longrightarrow} x'\}\times\{z':z\stackrel{\sigma}{\longrightarrow} z'\}.\vspace{-5pt}$$  
			\noindent	Hence, the clauses $ (\ref{def-supervisor SE}-b)$ and $ (\ref{def-supervisor SE}-c)$ from Definition \ref{def-supervisor SE} hold for $W$, $W^*$ and $\sigma$. In the following, we intend to show that  $(\ref{def-supervisor SE}-a)$ holds for $\sigma$, $W$ and $W^*$.
			Since $W_0\stackrel{s}{\longrightarrow}W$ and $x_{02}\stackrel{s}{\longrightarrow}x$ due to $ (W_0,x_{02})\stackrel{s}{\longrightarrow} (W,x)$, by Lemma \ref{lemma-about xz in W}, $x\in Reach (s,X_0)=\pi_G (W)$, and hence $ (x,z)\in W$ for some $z\in Z$. Further, due to $x\stackrel{\sigma}{\longrightarrow}x^*$ and the construction of $W^*$, we get $z\stackrel{\sigma}{\longrightarrow}\ast (x,z,x^*)$ and $ (x^*,\ast (x,z,x^*))\in W^*$, and thus the condition $ (\ref{def-supervisor SE}-a)$  holds for $\sigma$, $W$ and $W^*$. Thus, by Definition~\ref{def-supervisor SE}, $W\stackrel{\sigma}{\longrightarrow}W^*$, and hence	$ (W,x)\stackrel{\sigma}{\longrightarrow} (W^*,x^*)$ due to $x\stackrel{\sigma}{\longrightarrow}x^*$.
			
			We end this proof by  showing $ ( (y^*,x^*), (W^*,x^*))\in \Psi$.
			Since both $ (W,x)$ and $ (y,x)$ are $s$-reachable, 
			both $ (y^*,x^*)$ and  $ (W^*,x^*)$  are $s\sigma$-reachable because of $ (W,x)\!\stackrel{\sigma}{\longrightarrow}\! (W^*,x^*)$ and $ (y,x)\stackrel{\sigma}{\longrightarrow} (y^*,x^*)$.
			Hence,  $ ( (y^*,x^*), (W^*,x^*))\in \Psi$ due to $W^*\subseteq \theta_{y^*}$, as desired.			
		\end{proof}
	\end{lemma}

	The following theorem states that $\mathcal{A} (E_{ (G,R)}^\uparrow)$ is a maximally permissive supervisor  (w.r.t.\ the simulation relation) for the $ (G,R)$-similarity control problem with required events.
	
	\begin{theorem}\label{th-SEarrow is a supervisor}
		For any plant  $G$ and  specification $R$, if $\SPR\\ (G,\!R)\mathbin{\neq}\emptyset$, then $\mathcal{A} (E_{ (G,R)}^\uparrow)\mathbin\in \MSPR (G,\!R)$.\\
		
		\begin{proof}
			By Proposition \ref{prop-fixpoint}, Corollary \ref{corollary- fixpoint is control set} and Theorem \ref{th-uc admissbile is sufficient}, one has $\mathcal{A} (E_{ (G,R)}^\uparrow)\mathbin\in \SPR (G,R)$. Moreover, by Lemma  \ref{lemma-AE smaller than fixpoint} and \ref{lemma-S smaller than AES}, it is easy to see that \plat{$S||G\sqsubseteq \mathcal{A} (E_{ (G,R)}^\uparrow)||G$} for any $S\in \SPR (G,R)$, as desired.	
		\end{proof} 
	\end{theorem}

	\section{Relations to Other Control Problems}\label{sec:relations}
	
	In this section, we discuss the connections among three types of control problems for discrete event system. First, the  bisimilarity control problem \cite{takai2019bisimilarity, 2006Control} and  similarity control problem \cite{kushi2017synthesis} are recalled.

	\hspace{-2pt}Given a plant  $G$ and specification $R$, the $ (G,\!R)$\textit{-bisimilarity control problem}  (or, 	$ (G,R)$\textit{-similarity control problem}) refers to finding a $\Sigma_{uc}$-admissible supervisor $S$ such that $S||G\simeq R$  ($S||G\sqsubseteq R$, resp.).

	It is easy to see that $ (G,R)$-bisimilarity control problems (or,  
	$ (G,R)$-similarity control problems) can be regarded as $ (G,R)$-similarity control problems with required events with $\Sigma_r=\Sigma$  ($\Sigma_r=\emptyset$, resp.). In this sense, they are special cases of the issue considered in this paper. The notion recalled below is related to a result on the bisimilarity control problem.
	
	\begin{myDef}\label{def-state controllable}\cite{2006Control}
		Let $G$ be an automaton. A supervisor $S$ is said to be state-controllable with respect to $G$ if, for any  $\sigma\in \Sigma_{uc}$ and $s\in \Sigma^*$, 
		$$  Reach (s\sigma,X_0)\neq\emptyset \Longrightarrow \forall y\in Reach (s,Y_0)\  (y\stackrel{\sigma}{\longrightarrow}).$$
	\end{myDef}
	
	That is, the uncontrollable event $\sigma$ is enabled at all $s$-reachable states in $S$ whenever $s\sigma$ is a run in $G$.
	Necessary and sufficient conditions for the  solvability of the $ (G,R)$-bisimilarity  and  $ (G,R)$-similarity control problems have been given as below.

	\begin{theorem}\cite{kushi2017synthesis,takai2019bisimilarity,2006Control}\label{th-bi and similarity control problem solvabilty}	(1) The $ (G,R)$-bisimilarity control problem is solvable iff
		$T||G\simeq R$ for some state-controllable (w.r.t.\ $G$) automaton $T$  with state set $\powerset (X\times Z)$;

		\noindent	(2) The $ (G,\hspace{-1pt}R)$-similarity control problem is solvable iff
		$G\mathbin\sqsubseteq_{uc}   R$.

	\end{theorem}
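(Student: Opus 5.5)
The statement is cited from earlier literature, but I would derive both parts from the machinery of this paper. The idea is to instantiate $\Sigma_r$ as $\Sigma$ for part (1) and as $\emptyset$ for part (2), and then use Theorem \ref{th-uc admissbile is sufficient} together with the constructions $\mathcal{A}(\cdot)$ and $E(\cdot)$.

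For (2), take $\Sigma_r=\emptyset$. Then cc-simulation coincides with simulation, so the $(G,R)$-similarity control problem is exactly the similarity control problem with required events. By Theorem \ref{th-uc admissbile is sufficient}, solvability is equivalent to the existence of a $\Sigma_{ucr}$-controllability set. Clause (\ref{def-controlability}-b) is vacuous here, and a $\Sigma_{ucr}$-simulation is simply a $\Sigma_{uc}$-simulation.

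\emph{Necessity.} Observation \ref{obs-control set implies ucr simulation} gives $\bigcup E:G\sqsubseteq_{uc}R$.

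\emph{Sufficiency.} Given $\Phi:G\sqsubseteq_{uc}R$, I take $E=\{\Phi\}$. Condition (istate) holds by the (initial state) clause. For (\ref{def-controlability}-a), $match_{G,R}(\Phi,\sigma,\Phi)$ is precisely ($\Sigma_{uc}$-forward). So $E$ is a controllability set, and the problem is solvable. This part is routine.

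For (1), take $\Sigma_r=\Sigma$. For ($\Leftarrow$), let $T$ be state-controllable with $\Psi:T||G\simeq R$. Bisimilarity gives in particular $T||G\sqsubseteq_{cc}R$ with $\Sigma_r=\Sigma$. It then suffices to show that state-controllability implies $\Sigma_{uc}$-admissibility. Suppose $(y,x)$ is $s$-reachable and $x\stackrel{\sigma}{\longrightarrow}x'$ with $\sigma\in\Sigma_{uc}$. Then $Reach(s\sigma,X_0)\neq\emptyset$ and $y\in Reach(s,Y_0)$, so $y\stackrel{\sigma}{\longrightarrow}y'$, whence $(y,x)\stackrel{\sigma}{\longrightarrow}(y',x')$.

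For ($\Rightarrow$), let $S$ be $\Sigma_{uc}$-admissible with $S||G\simeq R$. The plan is to take $T=\mathcal{A}(E(\Phi))$, whose state set is contained in $\powerset(X\times Z)$ and can be padded with unreachable states to be exactly that set. Here $\Phi$ is the bisimulation, viewed as a cc-simulation with $\Sigma_r=\Sigma$. By Lemma \ref{lemma- E exists implies AE is a solution}, $T||G\sqsubseteq_{cc}R$.

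Two steps remain, and they are the main obstacle.
\begin{itemize}
\item First, I need the relation $\Phi_E$ from that lemma to be a bisimulation. The backward clause for all $\sigma\in\Sigma_r=\Sigma$ already gives the converse transfer, so $\Phi_E^{-1}$ is a simulation. The delicate point is the initial-state clause of $\Phi_E^{-1}$: every $z_0\in Z_0$ must be matched by some initial $(W_0,x_0)$. I would obtain this from $\Phi^{-1}$ being a simulation, which yields $((y_0,x_0),z_0)\in\Phi$, and then choose $W_0\in I_E$ containing $(x_0,z_0)$ and a witness for each other $x_0'$.
\item Second, I need $T$ to be state-controllable, not merely admissible. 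Lemma \ref{lemma-about xz in W} gives $\pi_G(W)=Reach(s,X_0)$ for every $s$-reachable $W$. So if $Reach(s\sigma,X_0)\neq\emptyset$, some $x\in\pi_G(W)$ has a $\sigma$-successor, and clause (\ref{def-controlability}-a') of Lemma \ref{lemma-E^* is successor} supplies a $W'$ with $W\stackrel{\sigma}{\longrightarrow}W'$, exactly as in the admissibility proof of Lemma \ref{lemma- E exists implies AE is a solution}.
\end{itemize}
Since $\pi_G(W)$ covers all $s$-reachable plant states rather than just one, state-controllability follows. This is where the construction $\mathcal{A}(\cdot)$ is essential.
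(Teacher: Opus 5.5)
Your proof is correct. Note, though, that the paper does not prove this theorem: it is quoted from the literature. The only related remark is the one after Corollary \ref{corollary-E act as solutions for bisimilarity control problem}, which says that clause (2) follows from Theorem \ref{th-uc admissbile is sufficient} with $\Sigma_r=\emptyset$. The reason given there is that $\Phi$ is a $\Sigma_{uc}$-simulation iff $\powerset(\Phi)$ is a $\Sigma_{ucr}$-controllability set, and $\bigcup E$ is a $\Sigma_{uc}$-simulation for every such $E$. Your part (2) is exactly this, with $\{\Phi\}$ in place of $\powerset(\Phi)$, which makes no difference.

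For clause (1) the paper gives no derivation. It instead proves the alternative characterization in Corollary \ref{corollary-E act as solutions for bisimilarity control problem}, and mentions the cited fact that state-controllability coincides with $\Sigma_{uc}$-admissibility. Your route derives (1) from the paper's own machinery. With $\Sigma_r=\Sigma$ you essentially rerun the proof of that corollary: $\Phi_E$ is a bisimulation, with its initial-state clause coming from $\Phi^{-1}$. You then check state-controllability of $\mathcal{A}(E(\Phi))$ via Lemma \ref{lemma-about xz in W} and pad the state set. This buys a self-contained proof with an explicit witness $T$, and shows that (1) is a consequence of Theorem \ref{th-uc admissbile is sufficient}.

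Two refinements.

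\textbf{Choice of $W_0$.} The witnesses for the other $x_0'$ must be taken from $\theta_{y_0}$ for the same $y_0$. Otherwise $W_0$ need not lie in $\bigcup_{y\in Y}\powerset(\theta_y)$, i.e.\ need not be a state of $\mathcal{A}(E(\Phi))$. Taking $W_0=\theta_{y_0}\cap(X_0\times Z_0)$ works, as in the paper's proof of that corollary.

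\textbf{Role of $\mathcal{A}(\cdot)$.} Your closing claim that $\mathcal{A}(\cdot)$ is essential for state-controllability is inaccurate, because every $\Sigma_{uc}$-admissible supervisor is already state-controllable. Suppose $y\in Reach(s,Y_0)$ and some $x\in Reach(s,X_0)$ has a $\sigma$-successor. Then $(y,x)$ is $s$-reachable in $S||G$, since the composition synchronizes on every event. Admissibility therefore gives $y\stackrel{\sigma}{\longrightarrow}$. Your direct argument via Lemma \ref{lemma-about xz in W} is valid but redundant. What $\mathcal{A}(\cdot)$ genuinely contributes is the state-set bound $\bigcup_{y\in Y}\powerset(\theta_y)\subseteq\powerset(X\times Z)$, which is the real content of clause (1).
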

	In \cite{kushi2017synthesis}, it is shown that, for any automaton $S$, $S$			 is state-controllable w.r.t.\ $G$ iff $S$ is $\Sigma_{uc}$-admissible w.r.t.\ $G$. We think this result weakens the significance of Theorem \ref{th-bi and similarity control problem solvabilty} (1) as a necessary and sufficient condition for the solvability of the bisimilarity control problem, because it only asserts that this problem can be solved if and only if it has a solution with $\powerset(X\times Z)$ as the state set. Of course, it isn't difficult to see that it helps to measure the complexity of verifying the existence of a supervisor in finite cases.
	As a corollary of Theorem \ref{th-uc admissbile is sufficient}, the following result provides another necessary and sufficient condition for the solvability of the bisimilarity control problem, which, similar to Theorem \ref{th-bi and similarity control problem solvabilty} (2), only involves properties of $G$ and $R$ and doesn't involve the solution itself.

	\begin{corollary}\label{corollary-E act as solutions for bisimilarity control problem}
		The $ (G,R)$-bisimilarity control problem is solvable if and only if there exists a $\Sigma_{ucr}$-controllability set $E$ from $G$ to $R$ with $\Sigma_r=\Sigma$ such that $\gamma_R (\bigcup E_0)=Z_0$, that is,  $\forall z\in Z_0 \, \exists W\in E_0 \, \exists x_0\in X_0 \, ( (x_0,z)\in W)$, where
		$$\gamma_R(\bigcup E_0)=\{z\in Z:(x,z)\in \bigcup E_0\text{ for some }x\in X\} \text{ and }$$
		$E_0=\{W\in E: \forall x\in X_0 \, \exists z\in Z_0 \, ( (x,z)\in W) \text{ and }  \\ \mbox{} \hfill W\subseteq X_0\times Z_0\}.$\\

		\begin{proof}
			($\Longleftarrow$) Consider the automaton
			$\mathcal{A} (E)= (\bigcup_{W\in E}\powerset (W),\linebreak[1] \Sigma, \longrightarrow, I_E)$ defined in Definition \ref{def-supervisor SE}. Clearly, $I_E=E_0$. As shown in the proof of Lemma \ref{lemma- E exists implies AE is a solution}, the automaton 	$\mathcal{A} (E)$ is $\Sigma_{uc}$-admissible, and the relation $\Phi_{E}\subseteq  (\bigcup_{W\in E}\powerset (W)\times X)\times Z$ defined as
			$$ \Phi_{E}=\{ ( (W,x),z):  (x,z)\in W \text{ and } W\in \bigcup_{W\in E}\powerset (W)\}$$ 
			is a binary relation from $\mathcal{A} (E)||G$ to $R$ satisfying the conditions  ($\Sigma$-forward) and  ($\Sigma_r$-backward).  In order to show that  $\Phi_{E}$ is a bisimulation from 	$\mathcal{A} (E)||G$ to $R$, by $\Sigma_r=\Sigma$, it is enough to show that
			
			\noindent$ (a)$ for any initial state $ (W_0,x_0)$ in $\mathcal{A} (E)||G$, there exists $z_0\in Z_0$ such that $ ( (W_0,x_0),z_0)\in \Phi_E$;
			
			\noindent$ (b)$ for any $z_0\in Z_0$, there exists an initial state $ (W_0,x_0)$ in $\mathcal{A} (E)||G$ such that $ ( (W_0,x_0),z_0)\in \Phi_E$.
			
			The former has been shown in the proof of Lemma \ref{lemma- E exists implies AE is a solution}; the later immediately follows from $\gamma_R (\bigcup E_0)=Z_0$. In a word, $\mathcal{A} (E)$ indeed is $\Sigma_{uc}$-admissible and $\Phi_E:\mathcal{A} (E)||G\simeq R$, as desired.

			($\Longrightarrow$) Let $S$ be a $\Sigma_{uc}$-admissible supervisor, $\Phi$ a bisimulation relation from $S||G$ to $R$ and $E (\Phi)=\{\theta_y:y\in Y \}$ (see, Definition \ref{def-EPhi}). We intend to show that $\bigcup_{W\in E(\Phi)}\powerset (W)$ is  the controllability set we are looking for.
			By Lemma \ref{lemma-S implies exists ES} and \ref{lemma-E^* is successor}, $\bigcup_{W\in E(\Phi)}\powerset (W)$ is a $\Sigma_{ucr}$-controllability set from $G$ to $R$ with $\Sigma_r=\Sigma$. It remains to prove that $\gamma_R (\bigcup E_0)=Z_0$.
			
			Clearly, $\gamma_R (\bigcup E_0)\subseteq Z_0$ because $W\subseteq X_0\times Z_0$ for each $W\in E_0$. Let $z_0\in Z_0$. Due to $\Phi: S||G\simeq R$, we get $ ( (y_0,x_0),z_0)\mathbin\in \Phi$ for some $ (y_0,x_0)\mathbin\in Y_0\times X_0$. Hence, $ (x_0,z_0)\in \theta_{y_0}$. Moreover, as shown in the proof of Lemma \ref{lemma-S implies exists ES},  $\forall x\in X_0\linebreak[4] \exists z\in Z_0 \, ( (x,z)\in \theta_{y_0})$. Furthermore, by $\theta_{y_0}\cap (X_0\times Z_0)\in \bigcup_{W\in E(\Phi)}\powerset (W)$, we have  $ (x_0,z_0)\in \theta_{y_0}\cap (X_0\times Z_0)\in E_0$. Hence, $z_0\in \gamma_R (\bigcup E_0)$. Thus, $Z_0\subseteq \gamma_R (\bigcup E_0)$, as desired.
		\end{proof}
	\end{corollary}

	In the  situation that $\Sigma_r=\emptyset$, for any binary relation $\Phi\subseteq X\times Z$, it is easy to see that $\Phi$ is a $\Sigma_{uc}$-simulation from $G$ to $R$ if and only if $\powerset (\Phi)$ is a $\Sigma_{ucr}$-controllability set. Moreover, for any $\Sigma_{ucr}$-controllability set $E$, it is obvious that $\bigcup E$ is a $\Sigma_{uc}$-simulation. 
	Thus, as a corollary of Theorem \ref{th-uc admissbile is sufficient}, clause (2) in Theorem \ref{th-bi and similarity control problem solvabilty} can also be obtained immediately.

Other necessary and sufficient conditions for the solvability of  the bisimilarity control problem are given in  \cite{8972367,takai2019bisimilarity}, where \cite{8972367}  considers it   under partial observation.\footnote{If the event set is further partitioned into an observable event set $\Sigma_o$ and an unobservable event set $\Sigma_{uo}$, i.e., $\Sigma=\Sigma_o\cup\Sigma_{uo}$ with $\Sigma_o\cap\Sigma_{uo}=\emptyset$, where the controller cannot change state in response to the execution of an unobservable event,
the system is said to be \emph{partially observed}; in case $\Sigma_o=\Sigma$ it is \emph{fully observed}.
}
 Necessarily, such three conditions must be equivalent to each other (in the case of full observation). Furthermore, based on the proof of Corollary \ref{corollary-E act as solutions for bisimilarity control problem}, Theorem 1 in \cite{8972367} and 
Theorem 26 in \cite{takai2019bisimilarity}, we can naturally derive how the three conditions can be converted into each other. 

In terms of complexity, all three verification algorithms share a similar (exponential) complexity, namely
$$O(|X|^6|Z|^6\times 2^{3|X||Z|})\mbox{\cite{8972367}}, O(2^{3|X||Z|}\times |X|^2\times|Z|^2\times|\Sigma|)\mbox{\cite{takai2019bisimilarity}}$$
$$\text{and }O (|X|^4|Z|^4\times2|\Sigma| \times 2^{2|X||Z|})\text{ respectively,}$$  
 where the complexity of the last one is given below Theorem \ref{th-uc admissbile is sufficient}, and the detailed algorithm can be found in the appendix.
All three complexities are smaller than the doubly exponential complexity of exhaustive search that uses (1) of Theorem \ref{th-bi and similarity control problem solvabilty}.

We now turn to a brief discussion on the range control problem considered in \cite{2007Control}. Given a plant $G$ and two specifications $R_1$ and $R_2$ with  $R_1\sqsubseteq  R_2$, the $(R_1,G,R_2)$-range control problem refers to finding a $\Sigma_{uc}$-compatible\footnote{A supervisor $S=(Y,\Sigma,\longrightarrow,Y_0)$ is $\Sigma_{uc}$-compatible if $y\stackrel{\sigma}{\longrightarrow}$ for each $\sigma\in\Sigma_{uc}$ and $y\in Y$.} supervisor $S$ such that $R_1\sqsubseteq S||G \sqsubseteq R_2$. Intuitively, the automata $R_1$ and $R_2$ specify the minimally adequate and maximally acceptable behavior of the supervised system $S||G$ respectively. In other words, $S||G$ must implement the behavior specified by $R_1$ while all its behavior must be permitted by $R_2$. From this, it is not difficult to see the intuitive similarities between this problem and the one considered in this paper. However, there  are significant differences between them at technical levels. 
First, for range control problems, minimally adequate behavior of the supervised system is characterized in terms of automata, while this paper emphasizes that it must respect the occurrence of certain events (i.e.\ events in $\Sigma_r$) when implementing specifications.
Secondly, the inequality $R_1\sqsubseteq S||G \sqsubseteq R_2$ involves two simulation relations, while $S||G\sqsubseteq_{cc}R$ involves only one cc-simulation relation. The difference between these two requirements has significant consequences. For example, in the circumstance where $R$ has only one initial state, the $(G,R)$-bisimilarity control problem can be  reduced trivially to the $(G,R)$-similarity control problem with $\Sigma_r=\Sigma$, while it seems far from trivial to find specifications $R_1$ and $R_2$ for a given plant $G$ and specification $R$ such that the $(G,R)$-bisimilarity control problem and the $(R_1,G,R_2)$-range control problem have exactly the same solution set.
In summary, we believe that the problem considered in this paper can not be reduced to the range control problem, although they share similar motivations.  We conclude this section by proposing two connections between them. Let $R_i \ (i=1,2)$ be two automata and $\Sigma'\subseteq \Sigma$. A binary relation $\Phi$ from $R_1$ to $R_2$ is said to be a simulation w.r.t.\ $\Sigma'$ if $\Phi$ satisfies the conditions (initial state) and (forward) with $\sigma\in\Sigma'$ in Definition \ref{def-simulation}.

\begin{proposition}\label{prop-simulation and range control 1}
  Let $G$ be a plant, $R_i=(Z_i,\Sigma,\longrightarrow,Z_{0i}) \ (\mbox{for}\\ i=1,2)$ specifications with $R_1\sqsubseteq R_2$ and $S$ a $\Sigma_{uc}$-compatible solution of the $(R_1,G,R_2)$-range control problem. If there exist simulations $\Phi: R_1\sqsubseteq R_2$ and $\Phi_1:R_1\sqsubseteq S||G$ such that $\Phi^{-1}$ is a simulation from $R_2$ to $R_1$ w.r.t.\ $\Sigma'$ and $\Phi_1^{-1}\circ\Phi: S||G\sqsubseteq R_2$%
\,\footnote{$\Phi\circ\Psi = \{(x,z)\mid \exists y ~ ((x,y)\in \Phi \wedge (y,z) \in \Psi)\}$ denotes the relational composition of binary relations $\Phi$ and $\Psi$ \cite{Howie95}.}, then $S$ is a $\Sigma_{uc}$-admissible supervisor such that $S||G\sqsubseteq_{cc} R_2$ with $\Sigma_r=\Sigma'$, that is, $S\in \SPR(G,R_2)$, where $\Sigma'=\{\sigma\in\Sigma: z_1\stackrel{\sigma}{\longrightarrow} \text{ in } R_1 \text{ for some } z_1\in Z_1\}$. \\
	
	\begin{proof}
		By Definition \ref{def-admissible uc}, it's obvious that $S$ is a $\Sigma_{uc}$-admissible supervisor. To complete the proof, it's sufficient to show that $\Phi_1^{-1}\circ\Phi$ is a cc-simulation relation from $S||G$ to $R_2$.
		Due to $\Phi_1^{-1}\circ\Phi: S||G\sqsubseteq R_2$, it  remains to verify  the ($\Sigma_r$-backward) condition. Let $((y,x),z_2)\in \Phi_1^{-1}\circ\Phi$ and $z_2\stackrel{\sigma}{\longrightarrow}z_2'$ with $\sigma\in \Sigma_r$. Then $((y,x),z_1)\in \Phi_1^{-1}$ and $(z_1,z_2)\in \Phi$ for some $z_1\in Z_1$, and hence $(z_2,z_1)\in \Phi^{-1}$. Since  $\Phi^{-1}$ is a simulation from $R_2$ to $R_1$ w.r.t.\ $\Sigma'$ and $\sigma\in \Sigma_r$ (=$\Sigma'$), it follows that $z_1\stackrel{\sigma}{\longrightarrow}z_1'$ with $(z_2',z_1')\in \Phi^{-1}$ for some $z_1'\in Z_1$. By $(z_1,(y,x))\in \Phi_1$ and $\Phi_1: R_1\sqsubseteq S||G$, we get $(y,x)\stackrel{\sigma}{\longrightarrow}(y',x')$  with $((y',x'),z_1')\in \Phi_1^{-1}$ for some $(y',x')\in Y\times X$, and thus $((y',x'),z_2')\in \Phi_1^{-1}\circ\Phi$, as desired.
	\end{proof}
\end{proposition}

We also have the following proposition, where $S(R_1,\!G,\!R_2)$ is the set of all $\Sigma_{uc}$-admissible solutions to the $(R_1,G,R_2)$-range control problem.
	
	\begin{proposition}
		Let $G$, $R_1$, $R_2$ and $\Sigma'$ be as in the above proposition, and let $S\in \SPR(G,R_2)$ with $\Sigma_r=\Sigma'$. If there exists a cc-simulation $\Phi$ from $S||G$ to $R_2$ such that $\Phi^{-1}$ is a simulation
		from $R_2$ to $S||G$ w.r.t.\ $\Sigma'$, then $S\in S(R_1,G,R_2)$. In particular, $\SPR(G,R_2)\subseteq S(R_1,G,R_2)$ whenever $|Z_{02}|\mathbin=1$.\\
		
		\begin{proof}
			Clearly, there exists a simulation relation $\Phi'$ from $R_1$ to $R_2$. Since $\Phi^{-1}$ is a simulation from $R_2$ to $S||G$ w.r.t.\ $\Sigma'$ $(\subseteq \Sigma)$, $\Phi'\circ\Phi^{-1}$ is a simulation from $R_1$ to $S||G$ w.r.t.\ $\Sigma'$. Further, since $\Sigma'$ contains all enabled events in $R_1$, $\Phi'\circ\Phi^{-1}$ is a simulation from $R_1$ to $S||G$, and hence $R_1\sqsubseteq S||G\sqsubseteq R_2$.

			Let $S\in \SPR(G,R_2)$. Thus, $S||G\sqsubseteq_{cc} R_2$.
			In the situation that $|Z_{02}|=1$, for each $\Psi:S||G\sqsubseteq_{cc}R_2$,		
			$\Psi^{-1}$ satisfies the condition (initial state) and (forward) with $\sigma\in\Sigma'$, and hence $\Psi^{-1}$ is a simulation from $R_2$ to $S||G$ w.r.t.\ $\Sigma'$. Therefore $S\in S(R_1,G,R_2)$, as desired.\vspace{2ex}
		\end{proof}
	\end{proposition}
	
	\section{Conclusion}\label{sec:conclusion}

	This paper considers the similarity control problem with required events based on the notion of covariant-contravariant simulation, which is an extension of both the bisimilarity control and similarity control problems explored in the literature. In this situation, the behavior of the supervised system not only needs to be allowed by the specification, but also needs to respect  requirements prescribed by the specification saying that in certain states certain events must be enabled.
	
	A necessary and sufficient condition for the existence of solutions to this problem is obtained in terms of $\Sigma_{ucr}$-con\-trollability sets. Moreover, a method for synthesizing a maximally permissive supervisor is provided. However,  the issue of nonblockingness (see,  \cite{2019Maximally, 2018Nonblockingsimilarity}) is not considered in this paper; this is an interesting topic for future work.
	
	\section*{Acknowledgments}

The authors would like to thank the Associate Editors and the anonymous reviewers for their valuable comments and constructive suggestions, which have significantly improved the quality of this manuscript.
The research of the third author was supported by Royal Society Wolfson Fellowship RSWF\textbackslash R1\textbackslash 221008.

    \section*{References}
        \vspace{-2em}
    \bibliographystyle{plain}
	\bibliography{Reference}

  \appendix
        \section{Appendix: Algorithms}

This appendix solves the similarity control problem with required events and synthesizes a maximally permissive supervisor by an algorithmic approach. By
Theorem \ref{th-uc admissbile is sufficient}, the solvability of this problem can be determined by checking the existence of a $\Sigma_{ucr}$-controllability set.

Given two automata $G$ and $R$, we first present the algorithm for checking the predicate $match_{G,R}(W, \sigma, W')$.

\begin{algorithm}
\caption{Verifying $match_{G,R}(W, \sigma, W')$}
\label{alg:match-verification}
\begin{algorithmic}[1]
\Require  $W $ and $W'$, and  $\sigma \in \Sigma$
\Ensure Return true if $match_{G,R}(W, \sigma, W')$, otherwise false
\State \textbf{boolean function } $match_{G,R}(W, \sigma, W')$
\ForAll{$(x, z) \in W$}
\ForAll{$x' \in X$ with $x \stackrel{\sigma}{\longrightarrow} x'$}
\State $\mathit{valid} \gets \text{true}$
\If{$\,\not\!\exists$ $z'$ with $z\stackrel{\sigma}{\longrightarrow}z'$ and $(x', z') \in W'$}
\State $\mathit{valid} \gets \text{false}$
\State \textbf{break}
\EndIf
\EndFor
\EndFor
\State \Return \text{true}
\end{algorithmic}
\end{algorithm}

\begin{algorithm}
\caption{Checking Existence of a $\Sigma_{ucr}$-Controllable Set}
\label{algorithm-fourth-control-set-existence}
\begin{algorithmic}[1]
\Require Automata $G$ and $R$
\Ensure Return true if there exists a $\Sigma_{ucr}$-controllable set $E$, otherwise false

\State $C \gets \powerset(X \times Z)$
\State $\mathit{changed} \gets \text{true}$

\While{$\mathit{changed}$}
\State $\mathit{changed} \gets \text{false}$
\ForAll{$W \in C$} \Comment{Iterate over elements $W$ in $C$} 
\ForAll{$\sigma \in \Sigma_{uc}$}\Comment{Check condition $(\ref{def-controlability}-a)$}
\If{$\exists W' \mathbin\in C$ such that $match_{G,R}(W,\sigma,W')$} 
\State $\mathit{found} \gets \text{true}$
\Else
\State $C \gets C - \{W\}$
\State $\mathit{changed} \gets \text{true}$
\State \textbf{break} \Comment{Proceed to next $W$}
\EndIf
\EndFor

\If{$W \in C$}\\
\Comment{If $W$ has not been removed, check condition $(\ref{def-controlability}-b)$}
\ForAll{$(x, z) \in W$ and $\sigma \in \Sigma_r$}
\ForAll{$z' \in Z$ such that $z \stackrel{\sigma}{\longrightarrow} z'$}
\State $\mathit{found} \gets \text{false}$
\ForAll{$W' \in C$}
\If{$match_{G, R}(W, \sigma, W')$ $\&\&$ $\exists x' \in X$ such that $x \stackrel{\sigma}{\longrightarrow} x'$ and $(x', z') \in W'$}
\State $\mathit{found} \gets \text{true}$
\State \textbf{break} 
\EndIf
\EndFor
\If{not $\mathit{found}$}
\State $C \gets C - {W}$
\State $\mathit{changed} \gets \text{true}$
\State \textbf{break} \Comment{Exit current loop}
\EndIf
\EndFor
\EndFor
\If{$\mathit{changed}$} \textbf{break} \EndIf \Comment{If $W$ is removed, exit the $(x,z)$ loop}
\EndIf
\EndFor
\EndWhile
\ForAll{$W_0 \in C$}\Comment{Check the condition (i-state)}
\If{$\forall x_0 \in X_0, \exists z_0 \in Z_0$ such that $(x_0, z_0) \in W_0$}
\State \Return \text{true}
\EndIf
\EndFor
\State \Return \text{false}
\end{algorithmic}
\end{algorithm}

Algorithm \ref{algorithm-fourth-control-set-existence} adopts an iterative approach, starting from the set of all possible state pairs (i.e., $\powerset(X \times Z)$) and progressively removing elements that do not satisfy the controllability conditions until a fixed point is reached. Then it verifies whether the fixed point contains an element satisfying condition (i-state) of Definition \ref{def-controlability}.

Given $W,W'\in E\subseteq \powerset (X\times Z)$ and $\sigma\in \Sigma$, the time complexity of verifying $match_{G,R}(W,\sigma,W')$ in Algorithm \ref{alg:match-verification} is $O(|X|^2|Z|^2)$. Therefore, for each $W\in E$, the time complexities of checking whether $W$ satisfies clause $(\ref{def-controlability}-a)$ (or, $(\ref{def-controlability}-b)$) is 
$O(|X|^2|Z|^2\times|\Sigma_{uc}|\times|E|) $ ($O(|X|^2|Z|^2\times|\Sigma_r|\times|E|\times |X|^2|Z|^2)$, resp.)

Consequently, the time complexity of checking whether $W$ satisfies both clauses $(\ref{def-controlability}-a)$ and $(\ref{def-controlability}-b)$ is $O(|X|^4|Z|^4\times|(\Sigma_{uc}+\Sigma_r)|\times|E|)$. Since $E$ can have at most $2^{|X||Z|}$ elements, the time complexity of deciding the existence of a solution to the $(G,R)$-simulation control problem with requirement events is at most $$O (|X|^4|Z|^4\times(|\Sigma|+|\Sigma_r|) \times 2^{2|X||Z|}).$$ 

Algorithm \ref{algorithm-construct S(E)} constructs the supervisor automaton $\mathcal{A}(E)$ based on a given $\Sigma_{ucr}$-controllability set $E$. 
It  builds the state set in the first step and then computes the initial state set of $\mathcal{A}(E)$.  Finally it verifies the transition conditions $(\ref{def-supervisor SE}-a,b,c)$ of Definition \ref{def-supervisor SE} by invoking Algorithm \ref{algorithm-check-transition-conditions of SE}. \begin{algorithm} 
\caption{Constructing Automaton $\mathcal{A}(E)$} \label{algorithm-construct S(E)} 
\begin{algorithmic}[1] 
\Require Automata $G$ and $R $, and the $\Sigma_{ucr}$-controllability set $E$ 
\Ensure Automaton $\mathcal{A}(E) = (Q, \Sigma, \rightarrow, I_E)$ 
\State $Q \gets \bigcup_{\widetilde{W}\in E}\powerset (\widetilde{W})$ 
\State $I_E \gets \emptyset$ \ForAll{$W_0 \in Q$} 
\If{$W_0 \subseteq X_0 \times Z_0$ \textbf{and} $\forall x_0 \in X_0, \exists z_0 \in Z_0$ such that $(x_0, z_0) \in W_0$} 
\State $I_E \gets I_E \cup \{W_0\}$ \EndIf \EndFor 
\ForAll{$W \in Q$} \ForAll{$\sigma \in \Sigma$ and $W' \in Q$} \If{$checkconditions(W,\sigma,W')$} \State add $W \stackrel{\sigma}{\longrightarrow} W'$ to $\mathcal{A}(E)$ 
\EndIf 
\EndFor
\EndFor 
\State 
\Return $\mathcal{A}(E) = (Q, \Sigma, \rightarrow, I_E)$ 
\end{algorithmic} 
\end{algorithm} 
\begin{algorithm} \caption{Checking $(\ref{def-supervisor SE}-a,b,c)$ of Definition \ref{def-supervisor SE}}
\label{algorithm-check-transition-conditions of SE}
\begin{algorithmic}[1] 
\Require $G, R, W, W' \subseteq X \times Z$, $\sigma$ 
\Ensure Return true if conditions $(\ref{def-supervisor SE}-a,b,c)$ of Definition \ref{def-supervisor SE} holds, otherwise false 
\State \textbf{boolean function } $checkconditions(W,\sigma,W')$ 
\State $\mathit{condA} \gets\text{false}$ 
\ForAll{$(x,z) \in W$} 
\ForAll{$(x',z') \in W'$} 
\If{$x \stackrel{\sigma}{\longrightarrow} x'$  and $z \stackrel{\sigma}{\longrightarrow} z'$ } 
\State $\mathit{condA} \gets\text{true}$ 
\State \textbf{break} 
\EndIf
\EndFor 
\If{$\mathit{condA}$} \textbf{break} \EndIf \EndFor 
\State $\mathit{condB} \gets match_{G, R}(W, \sigma, W')$ 
\State $\mathit{condC} \gets \text{true}$ 
\ForAll{$(x',z') \in W'$} 
\State $\mathit{found} \gets \text{false}$ 
\ForAll{$(x,z) \in W$} \If{$x \stackrel{\sigma}{\longrightarrow} x'$ $\&\&$ $z \stackrel{\sigma}{\longrightarrow} z'$} 
\State $\mathit{found} \gets \text{true}$ 
\State \textbf{break}
\EndIf 
\EndFor
\If{not $\mathit{found}$} \State $\mathit{condC} \gets \text{false}$ \State \textbf{break} 
\EndIf 
\EndFor
\If{$\mathit{condA}$ $\&\&$ $\mathit{condB}$ $\&\&$ $\mathit{condC}$} 
\State \Return \text{true}
\Else 
\State \Return \text{false} 
\EndIf 
\end{algorithmic}
\end{algorithm}

$\!$Algorithm \ref{alg:greatet fixpoint} generates the greatest fixpoint $E_{G,R}^{\uparrow}$ of  $F_{(G,R)}$. Subsequently, by calling Algorithm \ref{algorithm-construct S(E)}, we obtain the automaton $\mathcal{A} (E_{ (G,R)}^\uparrow)\mathbin\in \MSPR (G,\!R)$, which is a maximally permissive supervisor due to Theorem \ref{th-SEarrow is a supervisor}.

\begin{algorithm}
\caption{Constructing the greatest Fixpoint of $F_{(G,R)}$}
\label{alg:greatet fixpoint}
\begin{algorithmic}[1]
\Require Automata $G$ and  $R$
\Ensure The greatest fixpoint $E_{G,R}^{\uparrow}$ of $F_{(G,R)}$
\State $E \gets \powerset(\powerset(X\times Z))$ \Comment{Initialize}
\Repeat
    \State $E_{\text{prev}} \gets E$
    \State $E_{\text{new}} \gets \emptyset$
    \ForAll{$W \in E_{\text{prev}}$}
    
        \State $cond1 \gets \text{true}$
        \ForAll{$\sigma \in \Sigma_{uc}$}\Comment{Check (1) of Definition \ref{def-function}}
            \State $\text{exists\_match} \gets \text{false}$
            \ForAll{$W' \in E_{\text{prev}}$}
                \If{$match_{G,R}(W,\sigma,W')$}
                    \State $\text{exists\_match} \gets \text{true}$
                    \State \textbf{break}
                \EndIf
            \EndFor
            \If{$\neg \text{exists\_match}$}
                \State $cond1 \gets \text{false}$
                \State \textbf{break}
            \EndIf
        \EndFor
        \If{$\neg cond1$}
            \State \textbf{continue} \Comment{Skip $W$ if (1) fails}
        \EndIf

        \State $cond2 \gets \text{true}$ \Comment{Check (2) of Definition \ref{def-function}}
        \ForAll{$(x, z) \in W$ and $\sigma \in \Sigma_r$}
            
                \ForAll{$z' \in Z$ such that $z \xrightarrow{\sigma}_R z'$} 
                        \State $\text{found} \gets \text{false}$
                        \ForAll{$x' \in X$ such that $x \xrightarrow{\sigma} x'$}
                            
                                \ForAll{$W' \in E_{\text{prev}}$}
                                    \If{$(x', z') \in W'$ \&\&\newline\hfill\mbox{}\hspace{1.4in} $match_{G,R}(W,\sigma,W')$}
                                        \State $\text{found} \gets \text{true}$
                                        \State \textbf{break}
                                    \EndIf
                                \EndFor
                            
                            \If{$\text{found}$}
                                \State \textbf{break}
                            \EndIf
                        \EndFor
                        \If{$\neg \text{found}$}
                            \State $cond2 \gets \text{false}$
                            \State \textbf{break}
                        \EndIf

                \If{$\neg cond2$}
                    \State \textbf{break}
                \EndIf
            \EndFor
 
        \EndFor
        
        \If{$cond1$  \&\& $cond2$}
            \State $E_{\text{new}} \gets E_{\text{new}} \cup \{W\}$
        \EndIf
    \EndFor
    \State $E \gets E_{\text{new}}$
\Until{$E = E_{\text{prev}}$} \Comment{No change between iterations}
\State \Return $E$
\end{algorithmic}
\end{algorithm}

	\small

	\vspace{-2ex}
	\newpage
	\begin{wrapfigure}[6]{l}{1in}
		{\includegraphics[width=1in,height=1.25in,clip,keepaspectratio]{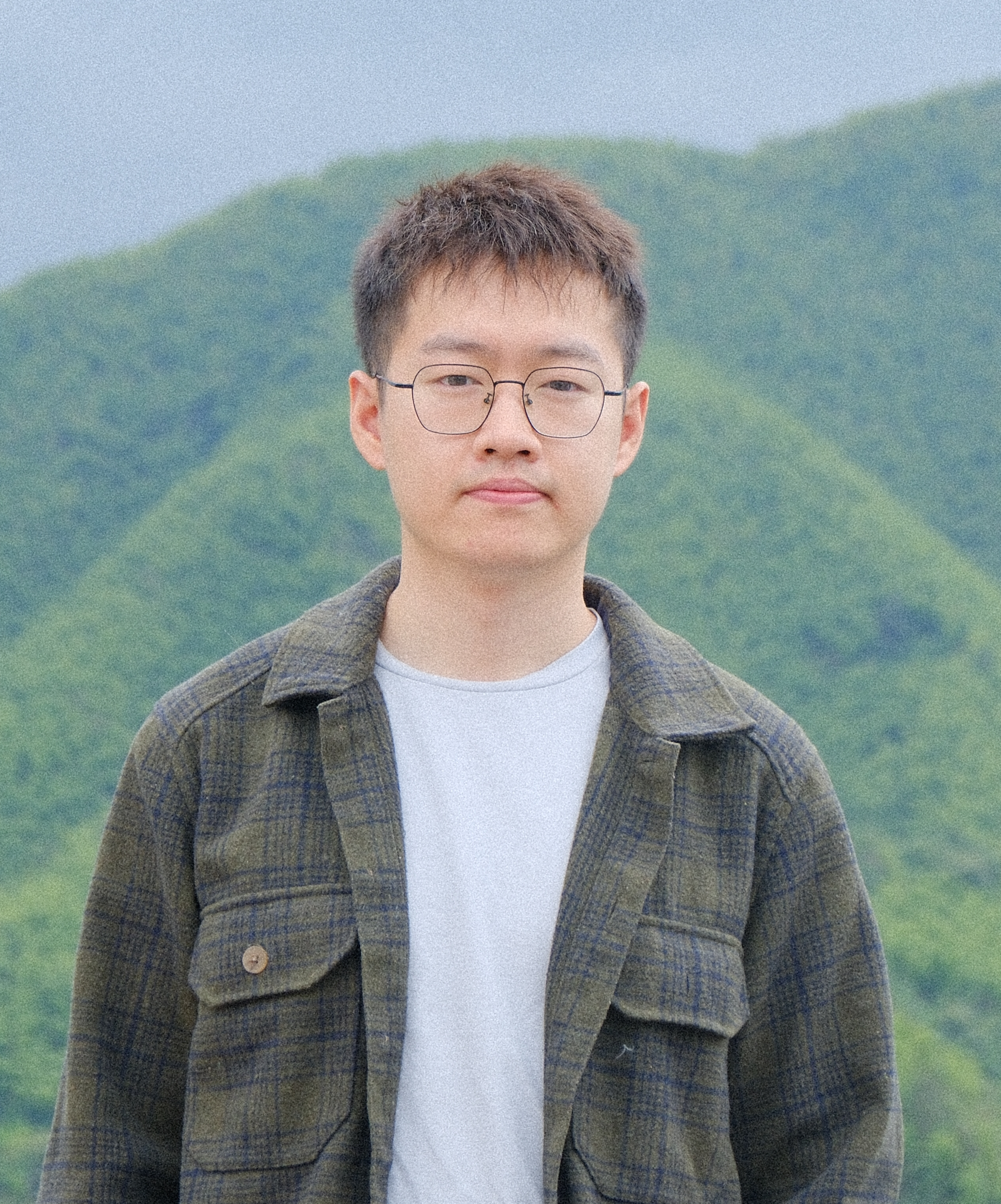}}
	\end{wrapfigure}
	\mbox{}\\[3ex]\textbf{Yu Wang} is currently a Ph.D. student  at the  College of Computer Science and Technology, Nanjing University of Aeronautics and Astronautics. His research includes formal methods and logic in computer science.
	\vspace{3ex}
	
	\begin{wrapfigure}[9]{l}{1in}
		{\includegraphics[width=1in,height=1.25in,clip,keepaspectratio]{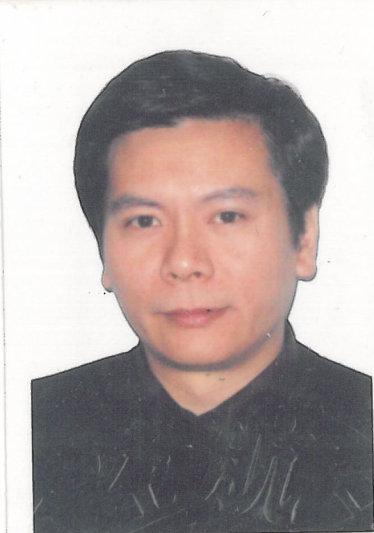}}
	\end{wrapfigure}
	\mbox{}\\[3ex]\textbf{Zhaohui Zhu} received the B.E., M.E. and Ph.D degree in computer science and technology from Nanjing University of Aeronautics and Astronautics (NUAA), Nanjing, China, in 1992, 1995 and 1998, respectively. From 1998 to 2000, he was a postdoctor at Nanjing University. From 2000 to 2003, he was an associate professor at NUAA. Since 2004, he has been a professor at NUAA. His research interests include formal methods and applied logic in computer science and AI.
\vspace{1ex}
	\vfill
	
	\begin{wrapfigure}{l}{1in}
		{\includegraphics[width=1in,height=1.25in,clip,keepaspectratio]{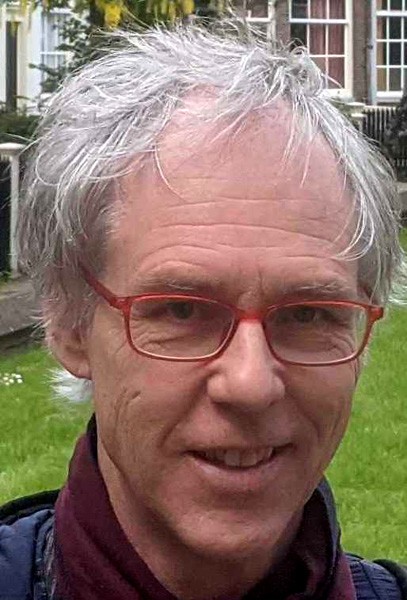}}
	\end{wrapfigure}
	\noindent\textbf{Rob van Glabbeek} (MAE) received his PhD degree in computer science from the Free University of Amsterdam in 1990. He spend 12 years at Stanford University and 18 years in Sydney, Australia -- from 2016 to 2022 as Chief Research Scientist at Data61, CSIRO. Since 2022 he is a Royal Society Wolfson Fellow at the University of Edinburgh. He won the 2020 CONCUR test-of-time award, is a foreign member of the Royal Holland Society of Sciences and Humanities,
	fellow of the Asia-Pacific Artificial Intelligence Association, and member of Academia Europaea.
	
	Prof.\ van Glabbeek is known for the conciliation of the interleaving and true concurrency communities by codeveloping the current view of branching time and causality as orthogonal but interacting dimensions of concurrency. Together with Peter Weijland he invented the notion of branching bisimulation, that since has become the prototypical example of a branching time equivalence, and the semantic equivalence most used in most verification tools. With Dominic Hughes he made a crucial contribution to the proof theory of linear logic by proposing a notion of proof net that had been sought after in vain by linear logicians since the inception of linear logic. Together with Vaughan Pratt he initiated the now widespread use of higher dimensional automata and other geometric models of concurrency. With Peter H\"ofner, he introduced \emph{justness}, an assumption in between progress and weak fairness, that is crucial for proving liveness properties of distributed system.
	
	He is editor-in-chief of Electronic Proceedings in Theoretical Computer Science, a member of the editorial boards of Information and Computation and Theoretical Computer Science, and has been on dozens of program committees.
	\vspace{1ex}

	\begin{wrapfigure}[9]{l}{1in}
		{\includegraphics[width=1in,height=1.25in,clip,keepaspectratio]{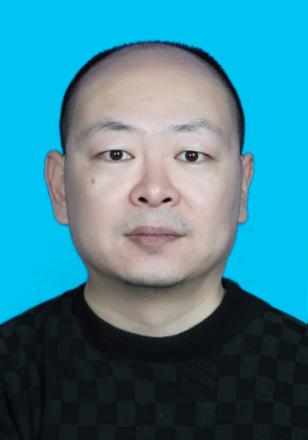}}
	\end{wrapfigure}
	\mbox{}\\[1ex]\textbf{Jinjin Zhang} received his Ph.D. degree in computer science and technology from the Institute of Computing Technology, Nanjing University of Aeronautics and Astronautics, Nanjing, in 2011. He is an associate professor of Nanjing Audit University, Nanjing. His current research interests include software engineering and logic in computer science.
	\vspace{4ex}
	
	\begin{wrapfigure}{l}{1in}
		{\includegraphics[width=1in,height=1.25in,clip,keepaspectratio]{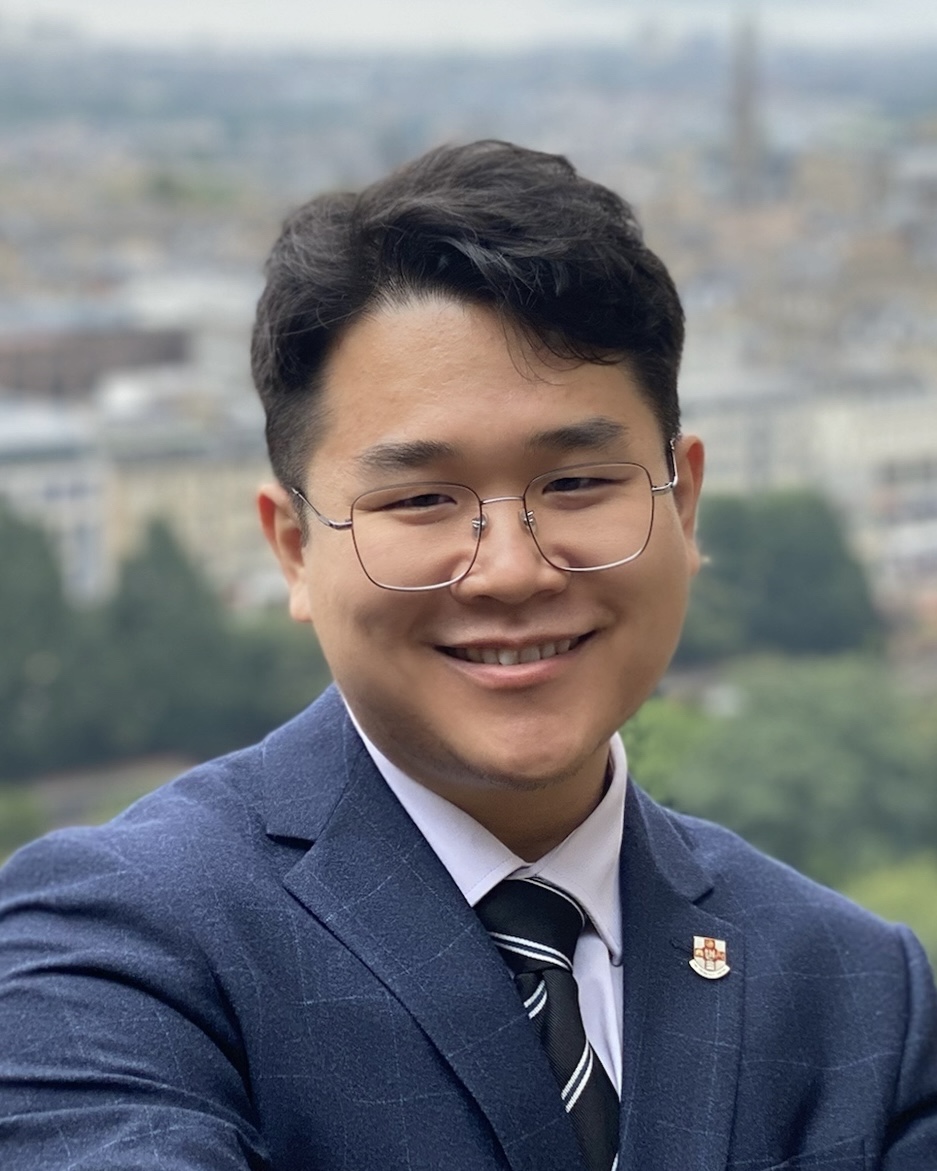}}
	\end{wrapfigure}
	\noindent \textbf{Yixuan Li} received his Ph.D. from the University of Edinburgh. His research bridges LLMs and formal methods to develop user-friendly frameworks for automated reasoning.
\end{document}